\newcommand{\SV}[1]{#1}
\newcommand{\LV}[1]{}
\DeclareSymbolFont{Shuffle}{U}{shuffle}{m}{n}
\DeclareFontFamily{U}{shuffle}{}
\DeclareFontShape{U}{shuffle}{m}{n}{%
  <-8>shuffle7%
  <8->shuffle10%
}{}
\DeclareMathSymbol\shuffle{\mathbin}{Shuffle}{"001}
\DeclareMathSymbol\cshuffle{\mathbin}{Shuffle}{"002}
\def \no {\noindent}
\def \l {\lambda}
\def \a {\alpha}
\def \b {\beta}
\def \ra {\rightarrow}
\def \Ra {\Rightarrow}
\def \beq{\begin{eqnarray*}}
\def \eeq{\end{eqnarray*}}
\newcommand{\emptyword}{\lambda}
\newcommand{\qed}{\hfill$\Box$}
\newenvironment{proof}{\underline{Proof}}{\qed}
 \newtheorem{theorem}{Theorem}
 \newtheorem{definition}[theorem]{Definition}
 \newtheorem{proposition}[theorem]{Proposition}
 \newtheorem{example}[theorem]{Example}
 \newtheorem{remark}[theorem]{Remark}
\title{When Stars Control a Grammar's Work}
\author{Henning Fernau
\institute{Abteilung Informatikwissenschaften\\  
Universit\"at Trier, 
Germany} 
\email{fernau@uni-trier.de}
\and 
Lakshmanan Kuppusamy
\institute{School of 
Computer Science and 
Engineering\\ VIT, 
Vellore
, India} \email{klakshma@vit.ac.in}
\and 
Indhumathi Raman
\institute{Department of Computing Technologies, \\
School of Computing, SRMIST, Chennai, India} \email{indhumar2@srmist.edu.in}
}
\begin{document}
\maketitle

\begin{abstract}
Graph-controlled insertion-deletion (GCID) systems are regulated extensions of insertion-deletion systems. Such a system has several components and each component contains some insertion-deletion rules. The components are the vertices of a directed control graph\LV{, whose arcs describe how strings can move between components}.
A rule is applied to a string in a component and the resultant string is moved to the target component specified in the rule.
The language of the system is the set of all terminal strings collected in the final component.
\LV{In this paper, w}\SV{W}e 
impose the restriction in the structure of the underlying graph to be a star structure where there is a central, control component which acts like a master and transmits a string (after applying one of its rules) to one of the components specified in the (applied) rule. A component which receives the string can process the obtained string with any applicable rule available in it and sends back the resultant string only to the center component. 
With this restriction, we obtain computational completeness for some descriptional complexity measures. 
\LV{The obtained results can also be applicable to membrane computing as the graph structure of the system is a star.}


\end{abstract}


\section{Introduction}

%
Insertion-deletion systems are part of formal languages which are extensively analyzed. The motivation for the systems comes from both linguistics \cite{Mar69,Pau97} and molecular biology. The action of inserting or deleting some strands do occur often in 
DNA processing\SV{~\cite{PauRozSal98a}} and RNA editing\SV{~\cite{Ben93}}. 
\LV{Besides, a mathematical motivation for insertion operation can be found in \cite{Hau83}, whereas the deletion operation was first studied in \cite{Kar91}. }These two operations together were 
introduced as a  
formal languages theory framework in
\cite{KarThi96} and further studied in \cite{KarPTY99,TakYok2003}. The corresponding 
grammatical mechanism is called {\it insertion-deletion system} (abbreviated as ins-del system). \SV{The insertion operation means inserting a string $\eta$ in between the strings $w_1$ and $w_2$, whereas the deletion operation is deleting a substring $\delta$ from the string $w_1\delta w_2$.}\LV{ Informally, 
if a string $\eta$ is inserted
between two parts $w_1$ and $w_2$ of a string $w_1w_2$ to get $w_1 \eta w_2$,
we call the operation {\it insertion}, whereas if a substring $\delta$ is
deleted from a string $w_1 \delta w_2$ to get $w_1w_2$, we call the operation {\it deletion}.} 

In the literature, several variants of ins-del systems have been considered\SV{. We refer to the survey article~\cite{Ver2010} for details concerning the state-of-the-art around 2010.}\LV{, like ins-del P systems \cite{AlhKRV2011,KriRam0102}, tissue P systems with ins-del rules~\cite{KupRam2003}, context-free ins-del systems~\cite{MarPRV2005}, 
matrix ins-del systems \cite{KupMahKri2011,PetVer2012,FerKupRam2018,FerKupRam2021a}, random context and semi-conditional ins-del systems~\cite{IvaVer2015a} 
etc. All the mentioned papers (as well as~\cite{KarPTY99,TakYok2003}) 
characterized the recursively enumerable languages 
(\emph{i.e.}, \emph{computational completeness}) using ins-del systems.
We refer to the survey article~\cite{Ver2010} for details of variants of ins-del systems; this survey also discusses some proof techniques for showing computationally completeness results.} 
One of the important variants of ins-del systems is {\it graph-controlled ins-del systems} (abbreviated as GCID systems), introduced in \cite{FreKRV2010} and further studied in \cite{IvaVer1314}. In such a  system, the concept of \emph{components} is introduced, which are associated with insertion or deletion rules. The transition is performed by choosing any applicable rule from the set of rules of the current component and by moving the resultant string to the target component specified in the rule in order to continue processing it. 
Several restrictions of graph control have been studied, e.g., matrix ins-del systems (see \cite{PetVer2012,FerKupRam2021a} and more papers cited there), time-varying ins-del systems~\cite{AlhFIV2022},  or path-controlled ins-del systems~\cite{FerKupRam2019}.
In this paper, we consider star control (which also has been considered in~\cite{IvaVer2017} in an implicit way when dealing with graph-controlled systems with two components.
This models a kind of master-slave system in the sense that the central component always dispatches work to the slave components who, after finishing their work, return the result to the master component. Graph-controlled insertion-deletion systems whose underlying control graph is a tree are equivalent to ins-del P systems~\cite{KriRam0102,IvaVer1314}. Hence, our restriction can be viewed as a special case of  ins-del P systems. We want to point to one technicality here: with P systems (and similarly with several restrictions of graph control), there is the possibility to stay in the same membrane with the \emph{here} command (which corresponds to allowing loops in graph control); in the model that we consider in this paper, this is disallowed: when the master saw the `current work' (and worked on it one step), it has to pass it to some slave immediately, and after the slave performed one step, the work is handed back to the master, etc. Therefore, results in the literature concerning seemingly related models do not always compare well to this star model.

The \emph{descriptional complexity} of a GCID system is measured by its \emph{size} $s=(k;n,i',i'';m,j',j'')$, where the parameters represent resource bounds as given in Table~\ref{table-parameters}. Slightly abusing notation, the language class generated by GCID systems of size $s$ is denoted by $\mathrm{GCID}(s)$. We attach subscripts $P$ and $S$ when referring to path-controlled or star-controlled GCID systems, respectively.

\begin{table}[t]
\begin{center}
$$\begin{array}{|lcl|lcl|}
\hline 
k &=& \textrm{the~number~of~components}&&& \\
n &=& \max \{|\eta | \colon (i,(u, \eta, v)_I,j) \in R\} &
m&=& \max \{|\delta | \colon (i,(u, \delta, v)_D,j) \in R\}\\
i'&=& \max \{|u| \colon (i,(u, \eta, v)_I,j) \in R \} &
j'&=& \max \{|u| \colon (i,(u, \delta, v)_D,j) \in R \} \\
i''&=&\max \{|v| \colon (i,(u, \eta, v)_I ,j) \in R \} & 
j''&=&max \{|v| \colon (i,(u, \delta, v)_D,j) \in R \} \\
\hline 
\end{array}$$
\caption{Size $(k;n,i',i'';m,j',j'')$ of a GCID system}
\label{table-parameters}
\end{center}
\end{table}

\LV{Graph-controlled insertion-deletion systems whose underlying control graph is a tree are equivalent to insertion-deletion P systems. The components in the former system correspond to the membranes in the latter and the underlying tree structure of the former system correspond to the balanced parenthesis (that represent membranes) in the latter. The family of languages generated by ins-del P system with $k$ membranes and size $(n,i',i'',m,j',j'')$,  where the size parameters have the same meaning as in GCID system is represented by  $\mathrm{ELSP}_k(\mathrm{INS}_n^{i',i''}\mathrm{DEL}_m^{j',j''})$. This notation was used in \cite{IvaVer1314}, based on \cite{Pau2002}.

\begin{proposition}
 \label{prop-litsurvey}
(\cite{FerKupRam2017actainform}) For $i', i'', j',j'' \in \{0,1\}$ with $i'+i'' = j'+j''=1$, the following ins-del P systems are computationally complete. 
\begin{enumerate}

\item $\mathrm{RE}= \mathrm{ELSP}_3(\mathrm{INS}_1^{i',i''}\mathrm{DEL}_1^{1,1})=\mathrm{GCID}_P(3;1,i',i'';1,1,1)$

\item $\mathrm{RE}= \mathrm{ELSP}_3(\mathrm{INS}_2^{i',i''}\mathrm{DEL}_1^{j',j''})=\mathrm{GCID}_P(3;2,i',i'';1,j',j'')$

\item $\mathrm{RE}= \mathrm{ELSP}_4(\mathrm{INS}_1^{i',i''}\mathrm{DEL}_1^{j',j''})=\mathrm{GCID}_P(4;1,i',i'';1,j',j'')$

\item $\mathrm{RE}= \mathrm{ELSP}_4(\mathrm{INS}_2^{0,0}\mathrm{DEL}_1^{j',j''})=\mathrm{GCID}_P(4;2,0,0;1,j',j'')$

\item $\mathrm{RE}= \mathrm{ELSP}_4(\mathrm{INS}_1^{i',i''}\mathrm{DEL}_2^{0,0})=\mathrm{GCID}_P(4;1,i',i'';2,0,0)$

\end{enumerate}
\end{proposition} }


\LV{As path and star control both impose severe restrictions, we compare some results from the literature on the former with our results on the latter.

\begin{proposition}
 \label{prop-litsurvey}
(\cite{FerKupRam2019}) The following systems are computationally complete. 
\begin{enumerate}

\item $\mathrm{RE}=\mathrm{GCID}_P(5;1,1,1;1,0,0);$





\item $\mathrm{RE}=\mathrm{GCID}_P(4;1,1,0;2,0,0)=\mathrm{GCID}_P(4;1,0,1;2,0,0)$.

\end{enumerate}
\end{proposition} }

\noindent
The main results of this paper are the following ones.

\begin{enumerate}


\item (Theorem \ref{RE6110200}) $\mathrm{RE}=\mathrm{GCID}_S(6;1,1,0;2,0,0) = \mathrm{GCID}_S(6;1,0,1;2,0,0) $; 

\item (Theorem \ref{RE4211100}) $\mathrm{RE}=\mathrm{GCID}_S(4;2,1,1;1,0,0)$.

\end{enumerate}

Our proofs are based on the {\it Special Geffert Normal Form} of type-0 grammars, which characterizes the class $\mathrm{RE}$, the recursively enumerable languages. Formal definitions follow in the next section.

\section{Preliminaries}\label{sec-prelim}
We assume that the readers are familiar with the standard notations used in formal language theory. 
Here, we recall a few notations for the understanding of the paper.
  Let $\mathbb{N}$ denote the set of positive integers, and $[\ell\ldots k]=\{i\in\mathbb{N}\colon \ell\leq i\leq k\}$. 
Given an {\it alphabet} (finite set)  $\Sigma$, $\Sigma^*$ denotes the free monoid generated by $\Sigma$. The elements of $\Sigma^*$ are called {\it strings} or {\it words};  $\emptyword$ denotes the empty string. For a string $w \in\Sigma^*$, $|w|$ is the length of $w$ and $w^R$ denotes the reversal (mirror image) of $w$. 
$L^R$ and $\mathcal{L}^R$ are also understood for languages $L$ and language families $\mathcal{L}$, collecting all reversals of words from $L$ and all reversals of languages from $\mathcal{L}$, respectively. 
  For the computational completeness results, as our main tool we  use  the fact that type-$0$ grammars\LV{\footnote{A type-$0$ grammar $G$ is usually specified by a quadruple $(N,T,P,S)$ consisting of a nonterminal alphabet $N$, a terminal alphabet $T$, a finite set of (production) rules $P$ and a start symbol $S\in N$. Rules are written in the form $\a\to\b$, $\a,\b\in(N\cup T)^*$. This defines a rewrite relation $\Ra_G\subseteq (N\cup T)^*\times (N\cup T)^*$, with $u\Ra_Gv$ if $v$ is obtained from~$u$ by replacing the subword $\a$ by~$\b$, for some  $\a\to\b\in P$. The reflexive transitive closure $\Ra_G^*$ can be used to define the semantics of $G$ ---the language of~$G$--- collecting 
  all $w\in T^*$ with $S\Ra_G^*w$.}} in Special Geffert Normal Form (SGNF) \LV{are known to characterize}\SV{describe} \SV{$\mathrm{RE}$}\LV{the recursively enumerable 
  languages}.
\begin{definition}\label{def-SGNF}
A type-$0$ grammar $G=(N,T,P,S)$ is said to be in SGNF if 
\begin{itemize}
\item 
$N$ decomposes as $N=N'\cup N''$, where $N''=\{A,B,C,D\}$ and $N'$ contains at least the two nonterminals $S$ and $S'$; 
\item the only non-context-free rules in $P$ are $AB \to\emptyword$ and $CD\to\emptyword$;
\item the context-free rules are of the form (i) $S'\to \emptyword$, or (ii) $X \to Y_1Y_2$, where $X\in N'\setminus\{S'\}$ and $Y_1Y_2\in ((N'\setminus\{X\})(T \cup \{B,D\})) \cup (\{A,C\}(N'\setminus\{X\}))$.

\end{itemize}
\end{definition}

The way the normal form is constructed is described in \cite{FreKRV2010}, based on \cite{Gef91a}. 
We can assume that $S'$ does not appear on the left-hand side of any non-erasing rule. This also means that the derivation in~$G$ undergoes two phases. In phase I, only context-free rules are applied. This phase ends with applying the context-free deletion rule $S' \ra \l$ (which is the only rule that has~$S'$ on its left-hand side). Then only, non-context-free deletion rules $AB \to \l$ and $CD \to \l$ are applied in phase II. Notice that the symbol from~$N'$, as long as present, separates $A$ and $C$ from $B$ and $D$;  this prevents a premature start of phase II. One of the features of SGNF derivations is that any string that can be derived can contain at most one substring $AB$ or $CD$ in its so-called \emph{center}. If such a substring is present, the derivation is in phase II; also, then no nonterminal from $N'$ occurs. In phase I, exactly one such nonterminal is present (in the center). Therefore, we can differentiate two cases within (ii) for $X,Y\in N'\setminus\{S'\}$ with $X\neq Y$: either, we have a rule $X\to bY$, with $b\in \{A,C\}$, or we have a rule $X\to Yb$, with $b\in T\cup\{B,D\}$.
This case distinction is often necessary when simulating type-0 grammars in SGNF, as we will see later.

We write $\Ra_r$ to denote a single derivation step using rule~$r$, and $\Ra_G$ (or $\Ra$ if no confusion arises) denotes a single derivation step using any rule of $G$. Then, $L(G)=\{w\in T^*\mid S\Ra^* w\}$, where $\Ra^*$ is the reflexive transitive closure of $\Ra$. 

\subsection{Graph-Controlled Insertion-Deletion Systems}

\begin{definition}
A \emph{graph-controlled insertion-deletion system} (GCID for short) with $k$ components
  is a construct $\Pi = (k,V,T,A,H,i_0,F,R)$, where 
\SV{(i) $k$ is the number of components, (ii) $V$ is an alphabet, (iii) $T \subseteq V$ is the terminal alphabet, (iv) $A \subset V^*$ is a finite set of strings, called {\it axioms}, 
present in the initial component,
 (v) $H$ is a set of labels associated (in a one-to-one manner) to the rules in $R$, (vi) $i_0 \in [1 \ldots k]$ is the initial component, (vii) $F\subseteq [1 \ldots k]$ is the set of final components and (viii) $R$ is a finite set of rules of the form $l: (i,r,j)$,  where $l\in H$ is the label of the rule, $r$ is an insertion rule of the form $(u, \eta, v)_I$, with insertion string~$\eta$ and context $(u,v)$, or deletion rule of the form $(u, \delta, v)_D$, with deletion string~$\delta$ and context $(u,v)$, where $u,v \in V^*, ~\eta,\delta \in V^+$ and $i, j \in [1 \ldots k]$.
} 
\end{definition}
\LV{ If one of the $u$ or $v$ is $\emptyword$ for all the insertion (deletion) contexts, then we call the insertion (deletion) as one-sided.
If both $u,v=\emptyword$ for every insertion (deletion) rule, then it means that the corresponding insertion (deletion) can be done freely anywhere in the string and is called \emph{context-free} insertion (context-free deletion).} 
Often, the component is part of the label name. This will also (implicitly) define $H$. 
We shall omit the label $l$ of the rule wherever it is not necessary for the discussion. 

We now describe how GCID systems work. 
Applying an insertion rule of the form $(u,\eta ,v)_I$ means that the string $\eta$ is inserted between $u$ and $v$; this corresponds to the rewriting rule
$uv \rightarrow u \eta v$. Similarly, applying a deletion rule of the form $(u,
\delta,v)_D$ means that the string $\delta$ is deleted between $u$
and $v$; this corresponds to the rewriting rule $u \delta  v \rightarrow uv$. 
A \emph{configuration} of~$\Pi$ is represented by $(w)_i$,  where $i\in [1\ldots k]$ is the number of the current component 
and $w\in V^*$ is the current string. \LV{In that case, w}\SV{W}e also say that $w$ has entered or moved to component $Ci$. 
We write $(w)_i \Rightarrow_{l} (w')_j$ 
if there is a rule $l: (i,r,j)$ in $R$, and $w'$ is obtained by applying the insertion or deletion rule~$r$ to $w$. 
For brevity, we write $(w)_i\Ra (w')_j$ if there is some rule $l$ in $R$ such that  $(w)_i\Ra_l (w')_j$.
To avoid confusion with traditional grammars, we write $\Ra_*$ for the transitive reflexive closure of $\Ra$ between configurations. The language $L(\Pi)$ generated by $\Pi$ is defined as $\{w\in T^*\mid (x)_{i_0}\Ra_* (w)_{i_f} ~\text{for some}~ x \in A$ and some $i_f\in F \}$.

The \emph{underlying control graph} of a \LV{graph-controlled insertion-deletion}\SV{GCID} system $\Pi$ with $k$ components is defined to be a graph with $k$ nodes labelled $C1$ through $Ck$ and there exists a directed edge from \LV{a node }$Ci$ to \LV{node }$Cj$ if there exists a rule of the form $(i,r,j)$ in $R$ of~$\Pi$. We also associate an undirected graph on $k$ nodes to a GCID system of $k$ components as follows: there is an undirected edge from a node $Ci$ to $Cj$ if there exists a rule of the form $(i,r_1,j)$ or $(j,r_2,i)$ in $R$ of~$\Pi$. 
We call a GCID system with $k$ components \emph{star-controlled} if its underlying undirected control graph has the edge set $\left\{\{C1, Ci\} \mid i\in [2\dots k] \right\}$.
\LV{Notice that t}\SV{T}his means that the corresponding directed control graph may contain arcs like $(C1,Ci)$, $(Ci,C1)$, but no loops\LV{ $(Cj,Cj)$}. 

Below, we provide a few examples for a better understanding of how the above-defined system works. As star-controlled systems have to have at least two components to produce anything non-trivial, it is interesting to observe that even with only two components, non-regular languages can be obtained.

\begin{example}
The language $\{a^nb^n \mid n \geq 0\}$ can be generated by a star-controlled insertion-deletion system 
with two components, alphabet $\{a,b,A,B\}$, the axiom set $\{AB\}$ in C1, final component $\{C1\}$ 
and the following rules:
$r1.1:(1,(A,a,\l)_I,2)$,
$r2.1:(2,(B,b,\l)_I,1)$,
$r1.2:(1,(\l,A,\l)_D,2)$ and
$r2.2:(2,(\l,B,\l)_D,1)$. 
A possible derivation of a terminal string is:
$$(AB)_1\Ra_{r1.1}(AaB)_2\Ra_{r2.1}(AaBb)_1\Ra_{r1.2}(aBb)_2\Ra_{r2.2}(ab)_1.$$
Observe that $(aBb)_2\Ra_{r2.1}(aBbb)_1$ is possible, but now the derivation is stuck, as any rule in $C1$ checks for the presence of the nonterminal~$A$. Yet, as the nonterminal~$B$ is present, the configuration  $(aBbb)_1$ cannot lead to a terminal word. The size of this system is $(2;1,1,0;1,0,0)$. A very similar system can be given for this language that is of size $(2;1,0,1;1,0,0)$. For the very similar language $\{a^nb^n\mid n\geq 1\}$, even a system with two rules $r1:(1,(a,a,\l),2)$ and $r2:(2,(b,b,\l),1)$ would suffice, with axiom $ab$.
\end{example}

Recall that the class REG of regular languages is the lowest class of the Chomsky hierarchy. It can be characterized by right-linear grammars whose rules have the form
$A\to Ba$ or $A\to\l$ for nonterminal symbols $A,B$ and a terminal symbol~$a$. We use this characterization to prove that star-controlled GCID systems can generate all regular languages. The previous example then shows that even non-regular languages can be generated.

\begin{theorem}Each regular language (and also some non-regular languages) can be generated by a $\mathrm{GCID}_S$ system of size $(2;3,0,1;2,0,0)$, where the initial component $C1$ is also the only final one. 
\end{theorem}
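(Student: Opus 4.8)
The plan is to take an arbitrary regular language $L$ and fix a right-linear grammar $G=(N,T,P,S)$ for it whose rules are all of the form $A\to Ba$ (with $A,B\in N$, $a\in T$) or $A\to\emptyword$. I would then build a two-component star-controlled GCID system $\Pi$ that simulates $G$ by keeping, as the working string, a word of the form $w\,A$ where $A\in N$ is the ``current'' nonterminal and $w\in T^*$ is the terminal prefix produced so far; the axiom in $C1$ is the single string $S$. A rule $A\to Ba$ of $G$ is simulated by an insertion that replaces the suffix $A$ by $Ba$, and a rule $A\to\emptyword$ is simulated by deleting the trailing $A$ so that a purely terminal word is reached in $C1$, which is both initial and the only final component.

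The key steps, in order: First, encode each right-linear rule $A\to Ba$ as a size-$(3,0,1)$ insertion in $C1$, namely $r_{A\to Ba}:(1,(\emptyword,Ba,A)_I,2)$ — this inserts the length-$3$ string $Ba$ immediately to the left of the occurrence of $A$, using $A$ as the right context of length $1$; note $|Ba|=3$ only if $|B|=1,|a|=1$, which forces the parameter $n=3$ (I would actually insert $B$ followed by $a$ as a single string of length $2$, giving $n=2$, but to be safe about how $N$ may need extra symbols the statement allows $n=3$). After this insertion the string is $w B a A$ in $C2$, which is wrong, so the second step is a cleanup: in $C2$ provide the deletion $(\emptyword,A,\emptyword)_D$ — but this is unsafe because it could delete the wrong nonterminal; instead I would use the pair of contexts to delete $A$ only when it sits at the right end, i.e. a rule in $C2$ of the form $(a,A,\emptyword)_D$ for each terminal $a$, deleting $A$ when preceded by the freshly inserted $a$; this has $m=2$ if we think of it as deleting $A$ with left context of length... no — a single-symbol deletion with one-sided context of length $1$ gives $m=1,j'=1,j''=0$, within $(2,0,0)$ only if the context is empty. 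I therefore reorganize: do the whole replacement of $A$ by $Ba$ in one insertion step in $C1$ and a matching single-symbol deletion of the old $A$ in $C2$ with \emph{empty} context, guaranteeing correctness by a structural invariant (at most one nonterminal is present at any time, and it is always the last symbol), then return to $C1$. Third, simulate $A\to\emptyword$ by a context-free deletion $(\emptyword,A,\emptyword)_D$ issued in $C1$ targeting $C2$ and a ``do nothing but bounce back'' — but loops are forbidden, so instead I delete $A$ on the way out to $C2$ and immediately return the now-terminal string to $C1$ via a neutral rule; since $C1$ is final, the word is collected. Fourth, prove $L(\Pi)=L$ by two inclusions: every $G$-derivation is mimicked step-for-step (ping-ponging between $C1$ and $C2$), and conversely the invariant ``the string in $C1$ is $wA$ with $w\in T^*$, $A\in N$, or $w\in T^*$ with no nonterminal (terminal, in $F=\{C1\}$)'' is maintained by every applicable rule, so no spurious words arise.

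The main obstacle I expect is \textbf{bounding the deletion context to $(2,0,0)$}, i.e. doing all cleanup with \emph{context-free} single- or double-symbol deletions, since context-free deletion is dangerous (it can fire anywhere). The resolution is the SGNF-style structural argument already highlighted in the excerpt: design the nonterminal encoding so that at every reachable configuration there is at most one nonterminal symbol and it occupies a fixed position (the right end), making a context-free deletion of that symbol effectively deterministic; any ``wrong'' application simply produces a string that can never be cleaned up to a terminal word in $C1$, hence contributes nothing to $L(\Pi)$. The insertion side is comparatively easy, since we are allowed length-$3$ insertion strings and a one-sided context of length $1$ (the parameters $n=3,i'=0,i''=1$), which comfortably accommodates replacing the trailing nonterminal. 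A secondary technical point is the ``no loops'' restriction: I must ensure every step genuinely moves between $C1$ and $C2$, which is why the $A\to\emptyword$ simulation is split as ``delete on the arc $C1\to C2$, bounce back on the arc $C2\to C1$'' rather than a self-loop; verifying that this bounce cannot be abused (e.g. to delete terminals) is part of the invariant check.
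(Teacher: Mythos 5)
There is a genuine gap, and it is precisely at the point you flagged as the main obstacle. Your final construction inserts $aB$ (length $2$) before the occurrence of $A$ and then removes the old $A$ in $C2$ by a \emph{single-symbol, context-free} deletion $(\emptyword,A,\emptyword)_D$. After the insertion the string $w\,aB\,A$ contains \emph{two} nonterminals, and $C2$ must contain such a cleanup deletion for every left-hand-side nonterminal; nothing prevents firing the rule that deletes the freshly inserted $B$ instead of the old $A$. This does not get stuck -- it silently simulates a production $A\to aA$ that need not be in the grammar, and the faulty string still satisfies your invariant (one nonterminal, at the right end), so it can be continued to a terminal word collected in $C1$. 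Concretely, for $G$ with $S\to aX$, $X\to bS$, $S\to\emptyword$ (so $L(G)=(ab)^*$), the derivation $(S)_1\Ra(aXS)_2\Ra(aS)_1\Ra(aaXS)_2\Ra(aaX)_1\Ra\cdots$ eventually yields $aab\notin L(G)$. Your claim that wrong applications ``can never be cleaned up'' is therefore false, and the invariant you state is not even maintained by the intended intermediate configurations. A second, independent defect: for $A\to\emptyword$ you delete $A$ on the arc $C1\to C2$ and return via ``a neutral rule'', but no such rule exists -- every GCID rule must insert or delete a nonempty string, and since $C1$ is the only final component the terminal word stranded in $C2$ is never collected.

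The missing idea -- and the reason the theorem allows $n=3$ and $m=2$ rather than the $n=2$, $m=1$ you settle for -- is a marker symbol. The paper inserts $aB\$$ (length $3$) immediately to the left of $A$ in $C1$ and then performs in $C2$ the context-free deletion of the \emph{two-symbol} string $\$A$. Because $\$$ occurs exactly once and sits immediately left of the nonterminal to be retired, the context-free deletion is unambiguous: it cannot touch the new nonterminal $B$, and it simultaneously disposes of the marker. The same device handles $A\to\emptyword$: insert just $\$$ before $A$ in $C1$, delete $\$A$ in $C2$, and the terminal word lands back in $C1$. Your proposal explicitly discards both extra resources (``giving $n=2$''; ``deletion \ldots with empty context''), which is exactly what makes it unsound; to repair it you would have to reintroduce the marker and the length-$2$ deletion, i.e., the paper's construction.
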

Later, we will see that with both components being final, many more languages can be described.

\noindent
\begin{proof}
We only sketch the construction in the following. For each rule of the form $A\to aB$ of a right-linear grammar~$G$ that we start with, we introduce the insertion rule $(\l,aB\$,A)$ into the first component of the  simulating  $\mathrm{GCID}_S$ system~$\Pi$.
For each erasing rule $A\to\l$, we add the insertion rule $(\l,a\$,A)$ into the first component of~$\Pi$.
In both cases, $\$$ is a special marker symbol that is taken care of in the second component that contains all possible deletion rules of the form $(\l,\$A,\l)$ for any nonterminal~$A$ of~$G$. For instance, if~$G$ contains the rules $S\to aX$ and $X\to\l$, enabling the derivation $S\Ra aX\Ra a$, then the simulation is performed as follows:
\LV{$$}\SV{$}(S)_1\Ra(aX\$ S)_2\Ra(aX)_1\Ra(a\$X)_2\Ra(a)_1\,.\LV{$$}\SV{$}
\end{proof}

By adding more nonterminal symbols, one can also achieve this result with $\mathrm{GCID}_S$ systems  of size $(2;2,0,1;2,0,0)$.
We leave it open if $\mathrm{GCID}_S$ systems with only two components and only one final component can generate each recursively enumerable language.

\begin{example}\label{exa-copy}
The copy language $\{ww \mid w \in \{a,b\}^* \}$ can be generated by a star-controlled insertion-deletion system 
$\Pi = (3;\{a,b,A,B\}, \{a,b\}, \{AB\}, H, 1, \{1\}, R)$, where $H=\{r1.1, r1.2, r1.3, r2.1, r2.2,\\ r3.1 \}$ and $R$ is the set of rules depicted in Table~\ref{tableCOPYlang}; $\Pi$ has size $(3;1,0,1;1,0,0)$.
Starting with the axiom $AB$ in $C1$, if we apply $r1.3$, then we can apply $r2.2$ only in $C2$ and that produces $\l$ in $C1$. 
The nonterminals~$A$ and~$B$ serve as markers and if an~$a$ is introduced to the left of~$A$ in $C1$ (by $r1.1$), then  one~$a$ is introduced to the left of~$B$ (by $r2.1)$ in $C2$. Similarly, if one~$b$ is introduced to the left of~$A$ in $C1$ (by $r1.2$), then a~$b$ is introduced to the left of~$B$ (by $r3.1$) in $C3$. This guarantees to have the pattern of the copy language produced by the system~$\Pi$. But, there is a caveat here. If one can apply $r1.1$ in $C1$, then in $C2$, $r2.2$ can also be applied and in such a case, the pattern of the copy language is not followed.  However, then back in $C1$, only $r1.3$ can be applied, which means for the string to move to $C2$ and there, the derivation stops. As $C2$ is not a final component, by definition the strings over the terminal alphabet $\{a,b\}$ that are not also leading into the final component are not collected into the language generated by~$\Pi$. 
\par A sample derivation for $aabaab$ is given below. 
\beq
(AB)_1 \Ra_{r1.1} (aAB)_2 \Ra_{r2.1} (aAaB)_1 \Ra_{r1.1} (aaAaB)_2 \Ra_{r2.1} (aaAaaB)_1 \Ra_{r1.2} (aabAaaB)_3 \\ \Ra_{r3.1} (aabAaabB)_1 \Ra_{r1.3} (aabAaab)_2 \Ra_{r2.2} (aabaab)_1.
\eeq
The control graph underlying the construction is shown in Figure~\ref{figCONTROLgraphs}.
\end{example} 
\begin{table}[tb]
\begin{center}
\begin{tabular}{|c|c|c|}
\hline
{\bf Component C1}&{\bf Component C2} & {\bf Component C3} \\
\hline
$r1.1: (1,(\l,a,A)_I, 2)$ & $r2.1: (2,(\l,a,B)_I,1)$& $r3.1: (3, (\l,b,B)_I,1)$ \\ 
$r1.2: (1,(\l,b,A)_I,3)$ & $r2.2: (\l,A,\l)_D,1)$ &  \\
$r1.3: (1,(\l,B,\l)_D,2)$ && \\
\hline
\end{tabular}
\caption{Star-controlled ins-del system for generating 
$\text{Copy}_{a,b}=\{ ww \mid w \in \{a,b\}^*\}$.}
\label{tableCOPYlang}
\end{center}
\end{table}

\LV{As the reader can verify, the next example is an adaptation of Example~1 in \cite{FerKupRam2021a}. There, it was shown that one can generate the language of cross-serial dependencies with a matrix insertion-deletion system with two matrices of length two each. This corresponds to a star-shaped control graph, but not with tow, but with three components in order to avoid mixing the work of the second rules in the matrices. This example can hence also serve as a warning that results from matrix insertion-deletion systems do not translate that easily to results for the star-shaped model.

\begin{table}[tb]
\begin{center}
\begin{tabular}{|c|c|c|}
\hline
{\bf Component C1}&{\bf Component C2} & {\bf Component C3} \\
\hline
$r1.1: (1,(a,a,\l)_I, 2)$ & $r2.1: (2,(c,c,\l)_I,1)$& $r3.1: (3, (d,d,\l)_I,1)$ \\ 
$r1.2: (1,(b,b,\l)_I,3)$  && \\
\hline
\end{tabular}
\caption{Star-controlled ins-del system for generating 
$\text{CrossSerial}=\{a^nb^mc^nd^m\mid m,n\geq 1\}$.}
\label{tableCrossSerial}
\end{center}
\end{table}
\begin{example}
The language $\{a^nb^mc^nd^m\mid m,n\geq 1\}$ can be generated by   a star-controlled insertion-deletion system 
$\Pi = (3;\{a,b,c,d\}, \{a,b,c,d\}, abcd, H, 1, 1, R)$, where $H=\{r1.1, r1.2, r2.1, r3.1 \}$ and $R$ is the set of rules depicted in Table~\ref{tableCrossSerial}; $\Pi$ has size $(3;1,1,0;0,0,0)$. A sample derivation is:
$$(abcd)_1\Ra (aabcd)_2\Ra (aabccd)_1\Ra (aaabccd)_2\Ra (aaabcccd)_1\Ra (aaabbcccd)_3\Ra (aaabbcccdd)_1$$
The control graph underlying the construction is shown in Figure~\ref{figCONTROLgraphs}.
Notice that if we would merge $C2$ and $C3$ in order to save on the number of components, the resulting system would generate the language
$\{a^ib^jc^nd^m\mid i,j,m,n\geq1\land i+j=n+m\}$.
\end{example}}

\begin{figure}[tb]
\centering
\begin{subfigure}[t]{0.4\textwidth}
\begin{tikzpicture} 
\filldraw[black, draw=black]
    (1.5,4) circle (2pt) node[above] {$C3$}
    (0,4) circle (2pt) node[above] {$C1$}
    (-1.5,4) circle (2pt) node[above] {$C2$};
\draw   
 (-1.5,4) -- (1.5,4);
\end{tikzpicture}
\caption{Star-shaped control graphs underlying the star-controlled systems depicted in Example~\ref{exa-copy}.} 
\label{fig3star}
\end{subfigure}
\quad 
\begin{subfigure}[t]{0.4\textwidth}
\begin{tikzpicture} 
\filldraw[black, draw=black]
    (5.5,4.8)circle (2pt) node[above] {$C6$}
    (8.5,4.8) circle (2pt) node[above] {$C4$}
    (8.5,4) circle (2pt) node[below] {$C3$}   
    (7,4.8) circle (2pt) node[above] {$C5$}
    (7,4) circle (2pt) node[below] {$C1$}
    (5.5,4) circle (2pt) node[below] {$C2$};
\draw   
 (5.5,4) -- (8.5,4)
 (7,4) -- (8.5,4.8);
 \draw[dashed] (5.5,4.8) -- (7,4)
-- (7,4.8);
 
\end{tikzpicture}\caption{Control graphs underlying the star-controlled systems  in our main theorems.} 
\label{fig4star}
\end{subfigure}
\caption{Control graphs underlying different GCID systems in this paper.}
\label{figCONTROLgraphs}
\end{figure}
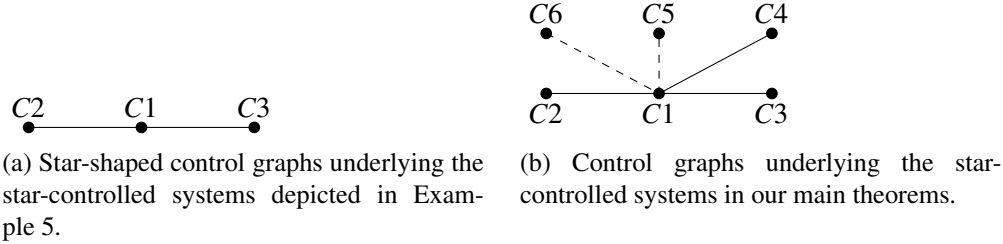

\LV{Incidentally, when the underlying control graph forms a star structure, this system can be interpreted as an insertion-deletion P system \cite{KriRam0102}, where the outer membrane $C1$ contains several disjoint inner membranes $C_2, C_3, \dots C_k$. Certain  objects (in our case, the simulating rules) placed in the outer membrane. When a rule $(1,r,j)$  available in the outer membrane is applied, the control is sent then sent to the $j^{th}$ membrane and any rule in $C_j$ can then be applied. For more details, see \cite{Pau2002}, \cite{AlhFreIva2014} on length P systems, where a linear membrane structure  with multiset objects in the form of vectors/numbers is considered.}

\section{Computational Completeness}\label{sec-REresults}
In this section, we present the main results of our paper.
First, we discuss some limitations for getting computational completeness results and then, we describe two important cases of resource restrictions that characterize $\mathrm{RE}$.

\subsection{GCID$_S$ systems with insertion and deletion length one}

In \cite{TakYok2003}, it has been proved that ins-del systems with size (1,1,1;1,1,1) characterize $\mathrm{RE}$. 
\LV{If we desire to have one-sided context for insertion/deletion,  then}\SV{Notice that}
it is proved in \cite{KraRogVer2008,MatRogVer2007} that  ins-del systems of size $(1,1,1;1,1,0)$ or $(1,1,0;1,1,1)$ cannot characterize $\mathrm{RE}$. It is therefore obvious that we need at least $2$ components in a graph-controlled ins-del system of sizes $(1,1,1;1,1,0)$ and $(1,1,0;1,1,1)$ to characterize $\mathrm{RE}$. In \cite{FerKupRam2017b},  we characterized $\mathrm{RE}$ by path-controlled GCID systems of size $(k;1,i',i'';1,j',j'')$ for different combinations of $k \geq 1, i', i'', j', j''$.  

However, if we impose star structure as the underlying control graph and the resultant string has to move in/move out during every derivation step, then it is interesting to notice that the context-free rules of SGNF, namely $p: X \to bY$, $q: X \to Yb$ and $h: S' \to \l$ can never be \emph{directly simulated} by rules of $\mathrm{GCID}_S(k;1,i',i'';1,j',j'')$ for any value of $k \geq 2, i',i'',j',j'' \geq 0$. 
Here, by a \emph{direct simulation} 
of a rule~$r$, we mean that, assuming a sentential form~$w$ may yield the sentential form~$v$ by applying rule~$r$ within the original grammar (which is, in our case, in SGNF), then the simulating star-controlled GCID system will start in the central component $C1$ with the sentential form~$w$ and derive after a number of steps, possibly, during the simulation, introducing and deleting symbols specific to~$r$ (called \emph{markers} in the following), the sentential form~$v$ and moving back to~$C1$ to be ready for the next simulation step.
\begin{proposition}
\label{star-not-CF}
The context-free rules of a grammar in SGNF, namely $p: X \to bY$ and $q: X \to Yb$ (with $X\neq Y$), as well as  $h: S' \to \l$, can never be directly simulated by rules of $\mathrm{GCID}_S(k;1,i',i'';1,j',j'')$ for any value of $k \geq 2, i',i'',j',j'' \geq 0$.     
\end{proposition}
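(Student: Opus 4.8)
The plan is a short \emph{parity argument}, resting on two facts: the rigid alternation of components forced by a star structure, and the observation that with insertion length and deletion length both equal to one, every derivation step of the simulating system changes the length of the current string by exactly $\pm 1$.

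First I would record the relevant structural facts. In any $\mathrm{GCID}_S$ system, every arc of the directed control graph has the form $(C1,Cj)$ or $(Cj,C1)$ with $j\in[2\dots k]$, and there are no loops; hence along any derivation the index of the current component strictly alternates between~$1$ and a value in $[2\dots k]$ --- from the centre one must leave to a slave, from a slave one must return to the centre. Consequently a derivation that starts in~$C1$ is back in~$C1$ precisely after an even number of steps, so every derivation of the shape $(w)_1\Ra_*(v)_1$ uses an even number of steps. Moreover, since the system has size $(k;1,i',i'';1,j',j'')$ we have $n=m=1$, and as insertion and deletion strings are non-empty each rule inserts, respectively deletes, exactly one symbol; thus each step changes the length of the current string by exactly $+1$ or $-1$, and a derivation $(w)_1\Ra_*(v)_1$ with an even number of steps therefore satisfies $|v|-|w|\equiv 0\pmod 2$.

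The contradiction is then immediate. Suppose some $\mathrm{GCID}_S(k;1,i',i'';1,j',j'')$ system directly simulates $p:X\to bY$, and choose any sentential form $w=w_1Xw_2$ with $w\Ra_p w_1bYw_2=:v$. By the definition of direct simulation the simulating system admits a derivation $(w)_1\Ra_*(v)_1$; here both $w$ and $v$ are marker-free, since they are grammar sentential forms and all markers used are fresh symbols specific to~$p$. By the previous paragraph $|v|-|w|$ is even; yet $|v|-|w|=|bY|-|X|=2-1=1$ is odd, a contradiction. The rules $q:X\to Yb$ and $h:S'\to\l$ are handled identically, the relevant length changes being $|Yb|-|X|=1$ and $|\l|-|S'|=-1$, again both odd.

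There is no genuinely hard step here; the only points needing care are to be explicit that the endpoints of the simulating derivation are exactly the grammar sentential forms (so that the auxiliary markers do not enter the length count) and to use that a star control graph forbids loops (so that the parity of the step count is forced). Note that the context parameters $i',i'',j',j''$ play no role whatsoever, which is precisely why this obstruction is so robust, and why the completeness results for length-one systems in the sequel simulate the context-free rules only \emph{indirectly}.
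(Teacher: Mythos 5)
Your proof is correct, and it shares the paper's central observation: because a star control graph has no loops and every arc is incident to $C1$, any derivation $(w)_1\Ra_*(v)_1$ strictly alternates between the centre and the slaves and hence consists of an even number of steps. Where you diverge is in how the required \emph{odd} number of steps is established. The paper counts the rule applications a direct simulation of $p:X\to bY$ must contain --- three ``basic'' operations (insert $b$, insert $Y$, delete $X$) plus $r$ marker insertions matched by $r$ marker deletions, for a total of $2r+3$ --- and notes this is odd. That count implicitly presupposes the shape of the simulation (every inserted auxiliary symbol is eventually deleted, exactly those three basic operations occur). Your length-parity invariant sidesteps this: since $n=m=1$ forces every step to change the string length by exactly $\pm 1$, an even number of steps yields an even net length change, contradicting the odd net change $|bY|-|X|=|Yb|-|X|=1$ and $|\l|-|S'|=-1$ demanded at the marker-free endpoints of a direct simulation. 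This is the same parity obstruction, but your formulation needs no assumptions about the interior of the simulating derivation, only the definition of direct simulation (endpoints are the genuine sentential forms); it is therefore slightly more robust, and, as you note, it makes transparent why the context parameters $i',i'',j',j''$ are irrelevant. The paper's version, on the other hand, makes explicit \emph{which} rules any attempted simulation would have to contain, which is perhaps more suggestive for the constructions that follow.
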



\newcommand{\proofofstarnotCF}{
\noindent\begin{proof}
To directly simulate $p: X \to bY$ using insertion-deletion rules, we need  two insertion rules (one to insert $b$ and one to insert $Y$; here we recall that the insertion length is $1$) and  one deletion rule to delete~$X$. Hence, we need three \emph{basic} insertion-deletion rules. Further, if we need to introduce $r\geq 1$ markers, then we can insert only one marker at a time using an insertion rule which will account for $r$ insertion rules. At the end of the derivation, we need to delete all the $r$ markers using $r$ deletion rules (since we can only delete one symbol at a time). This amounts to having $r$ insertion rules and $r$ deletion rules to deal with the markers and $3$ basic insertion-deletion rules to simulate $p: X \to bY$. This sums up to $2r+3$ rules. 

We need to distribute these $2r+3$ rules among the $k$ components of the GCID system. Let $C1$ be the central (initial and final) component. 
As the system is star-structured, in a rule $(i,(x,y,z)_\delta,j), \delta\in\{I,D\}, i,j \in [1\ldots k]$, we have $i\neq j$, as loops are not allowed, and $|\{i,j\}\cap\{1\}|=1$. Hence, the order of rule applications in any derivation will start at the central node and then alternate between central and non-central nodes.
Therefore, (i) the last rule of the simulation should be placed in a non-central component and not in $C1$ and (ii) the total number of rules for simulation is even. Since $2r+3$ is not even, the statement follows. \end{proof}}

\SV{\proofofstarnotCF}

By its definition\LV{ and as mentioned in the previous proof}, a derivation of a GCID$_S$ system has to alternate between the central component and any other component. By putting exactly the same rules in two components and declaring one of the two components as being central, while both are final, one obtains:
\begin{proposition}
$\mathrm{GCID}(1;n,i',i'';m,j',j'')\subseteq \mathrm{GCID}_S(2;n,i',i'';m,j',j'')$ holds for any value of $i',i'',j',j'' \geq 0$ and $n,m\geq 1$.   
\end{proposition}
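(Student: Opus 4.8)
The plan is to exhibit, for any GCID system $\Pi$ with one component, a star-controlled GCID system $\Pi'$ with two components that simulates $\Pi$ step-for-step. The idea (already hinted at just before the statement) is duplication: take $C1$ of $\Pi$, make two copies $C1$ and $C2$ of it in $\Pi'$, and in $\Pi'$ every rule $l:(1,r,1)$ of $\Pi$ is replaced by the two rules $l_1:(1,r,2)$ and $l_2:(2,r,1)$. The alphabet, terminal alphabet, axiom set, and the set of rule-bodies are unchanged; only the component indices are relabelled. The initial component of $\Pi'$ is $C1$; the final components are both $C1$ and $C2$ (this is the point where we exploit that star-controlled systems may have several final components, and it is why the claim lands in $\mathrm{GCID}_S(2;\dots)$ rather than in a one-final-component class). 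The undirected control graph of $\Pi'$ has exactly the edge $\{C1,C2\}$, so $\Pi'$ is star-controlled by definition, and clearly the descriptional-complexity parameters $(n,i',i'',m,j',j'')$ are inherited verbatim, since we only changed targets, never contexts or inserted/deleted strings.

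The key steps, in order, are: (i) define $\Pi'$ as above; (ii) observe that $\Pi'$ is a legitimate GCID$_S$ system of the claimed size, i.e. no loops, undirected edge set $\{\{C1,C2\}\}$, two components; (iii) prove the simulation invariant by induction on the length of a derivation: for every $x\in A$, $(x)_1\Ra_*^{\Pi}(w)_1$ in $\Pi$ if and only if $(x)_1\Ra_*^{\Pi'}(w)_{i}$ in $\Pi'$ for some $i\in\{1,2\}$, with the finer statement that a $\Pi'$-derivation visits $C1$ at even steps and $C2$ at odd steps and the string contents coincide with those of the corresponding $\Pi$-derivation after the same number of rule applications. For the ``$\Rightarrow$'' direction, each step $(u)_1\Ra_l(u')_1$ of $\Pi$ is mimicked by $(u)_1\Ra_{l_1}(u')_2$ or $(u)_2\Ra_{l_2}(u')_1$ depending on parity, which is always available because $l_1$ and $l_2$ carry the same rule body $r$ and $r$ applies to $u$ regardless of the component label. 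For the ``$\Leftarrow$'' direction, project any $\Pi'$-derivation back by forgetting the component index (every $\Pi'$-rule $l_1$ or $l_2$ comes from a unique $\Pi$-rule $l$ with the same body), obtaining a valid $\Pi$-derivation with the same string sequence; and since every component of $\Pi'$ is final and $C1$ is final in $\Pi$, the accepted terminal strings coincide. Concluding, $L(\Pi')=L(\Pi)$, which gives the asserted inclusion $\mathrm{GCID}(1;n,i',i'';m,j',j'')\subseteq\mathrm{GCID}_S(2;n,i',i'';m,j',j'')$.

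Honestly, there is no real obstacle here — the statement is a bookkeeping lemma. The only point that needs a word of care is the choice of \emph{both} components as final: if only $C1$ were final, a $\Pi$-derivation of odd length ending with a terminal string would correspond to a $\Pi'$-configuration sitting in $C2$, which would then be rejected; making $C2$ final as well repairs this, and the star-controlled model explicitly permits a set $F$ of final components. A secondary (trivial) point is to note that this duplication does not create spurious derivations: because the bodies in $C1$ and $C2$ of $\Pi'$ are identical and the graph forces strict alternation, the image of $R$ under the projection $l_1,l_2\mapsto l$ is exactly $R$ of $\Pi$, so $\Pi'$ cannot do anything $\Pi$ could not. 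Hence the two string languages are equal, and the size parameters are preserved exactly.
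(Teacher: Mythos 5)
Your proposal is correct and is exactly the construction the paper uses: the paper's (one-sentence) proof is ``putting exactly the same rules in two components and declaring one of the two components as being central, while both are final,'' which is precisely your duplication with alternation and two final components. Your elaboration of the parity invariant and the projection argument fills in the routine details the paper leaves implicit, but the approach is the same.
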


For example, this entails $\mathrm{GCID}_S(2;1,1,1;1,1,1)=\mathrm{RE}$ and similar computational completeness results based on what is known for ins-del systems. By way of contrast, computational completeness results for 2-component graph-controlled systems do not necessarily carry over to our star-controlled systems, as there, `loops' might be allowed. 


\subsection{GCID$_s$ systems with insertion or deletion length of more than one}

To simplify the presentation and proofs of our further results, the following observations from~\cite{FerKupRam2017b} are used, adapted to our case.

\begin{proposition} \label{cor-reversal} \cite{FerKupRam2017b}
Let 
$k,n,i',i'',m,j,j''$ be non-negative integers.
\LV{The following statements are true.} 
\begin{enumerate}
\item $\mathrm{GCID}_S(k;n,i',i'';m,j',j'')
= [\mathrm{GCID}_S(k;n,i'',i';m,j'',j')]^R$;
\item $\mathrm{RE}=\mathrm{GCID}_S(k;n,i',i'';m,j',j'')$
iff 
$\mathrm{RE}=\mathrm{GCID}_S(k;n,i'',i';m,j'',j')$.
\end{enumerate}
\end{proposition}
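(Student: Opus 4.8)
The plan is to prove Proposition~\ref{cor-reversal} by a syntactic transformation on GCID$_S$ systems that reverses every string handled by the system. The key observation is that the operations of insertion and deletion behave well under the reversal (mirror image) operation: inserting $\eta$ in the context $(u,v)$ into a string $w=w_1uv w_2$ to obtain $w_1u\eta v w_2$ corresponds, after reversal, to inserting $\eta^R$ in the context $(v^R,u^R)$ into $w^R=w_2^R v^R u^R w_1^R$ to obtain $w_2^R v^R \eta^R u^R w_1^R=(w_1 u\eta v w_2)^R$; the analogous statement holds for deletion. This is the content of the first item, and the second item follows from it immediately once we note that $\mathrm{RE}^R=\mathrm{RE}$.

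Concretely, given a GCID$_S$ system $\Pi=(k,V,T,A,H,i_0,F,R)$ of size $(k;n,i',i'';m,j',j'')$, I would construct the \emph{mirror system} $\Pi^R=(k,V,T,A^R,H,i_0,F,R^R)$, where $A^R=\{x^R\mid x\in A\}$ and $R^R$ is obtained from $R$ by replacing every rule $l:(i,(u,\eta,v)_I,j)$ with $l:(i,(v^R,\eta^R,u^R)_I,j)$ and every rule $l:(i,(u,\delta,v)_D,j)$ with $l:(i,(v^R,\delta^R,u^R)_D,j)$. The component structure, the labelling, the initial and final components are all left untouched, so $\Pi^R$ is again a GCID$_S$ system (the underlying undirected control graph is unchanged, hence still a star). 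Moreover, $|v^R|=|v|$ and $|u^R|=|u|$ and $|\eta^R|=|\eta|$, so $\Pi^R$ has size $(k;n,i'',i';m,j'',j')$: the roles of the left and right context-length parameters are swapped. I would then establish, by induction on the length of a derivation, the simulation invariant
\[
(w)_i\Ra_\Pi^* (w')_{i'} \quad\Longleftrightarrow\quad (w^R)_i\Ra_{\Pi^R}^* ((w')^R)_{i'},
\]
using the single-step correspondence above as the base/inductive case and the fact that reversal is an involution ($(w^R)^R=w$) together with the observation that membership in $T^*$ is preserved under reversal. This yields $L(\Pi^R)=L(\Pi)^R$, which gives item~1; item~2 then follows by applying item~1 in both directions and using $\mathrm{RE}=\mathrm{RE}^R$ (which holds because type-0 grammars are trivially closed under reversal, or directly because $\mathrm{RE}$ is closed under reversal).

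I do not expect any serious obstacle here; this is essentially a bookkeeping argument and is exactly the kind of "observation adapted to our case" that the paper attributes to~\cite{FerKupRam2017b}. The only points that need a little care are: (a) checking that the transformation of the context $(u,v)\mapsto(v^R,u^R)$ is the correct one (it is easy to get confused and write $(u^R,v^R)$, which would be wrong), and (b) confirming that no structural feature of star-controlledness is disturbed — in particular that no loops are introduced and that the edge set of the undirected control graph is unchanged, which is immediate since we never change the pair $(i,j)$ attached to any rule. Hence the main "work" is simply stating the single-step lemma cleanly and invoking it in the induction.
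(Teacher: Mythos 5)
Your proposal is correct and is exactly the standard mirror-system argument that the paper relies on (the paper itself gives no proof, merely citing \cite{FerKupRam2017b} for this adapted observation): swap the contexts as $(u,v)\mapsto(v^R,u^R)$, reverse the inserted/deleted strings and the axioms, note that the size parameters $i',i''$ and $j',j''$ swap while the control graph is untouched, and conclude via $L(\Pi^R)=L(\Pi)^R$ and the closure of $\mathrm{RE}$ under reversal. No gaps; the two points you flag for care (the order of the reversed contexts and the preservation of star-controlledness) are indeed the only delicate spots, and you handle both correctly.
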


\begin{table}[tb]
\begin{center}
\scalebox{0.9}{
\resizebox{1\linewidth}{!}{
\begin{tabular}{|c|c|c|}
\hline
{\bf  $C1$}&{\bf $C2$}&{\bf $C3$} \\\hline
$p1.1: (1,(X,p,\l)_I,2)$ & $p2.1: (2,(\l,X,\l)_D,1)$ & $p3.1: (3,(p',p^{v},\l)_I,1)$\\ 
$p1.2: (1,(p,p',\l)_I,3)$ & 
$p2.2: (2,(\l,p''p''',\l)_D,1) $ & \\
$p1.3: (1,(p',b,\l)_I,5)$ & &\\
$p1.4: (1,(\l,pp',\l)_D,4)$ &&\\
$p1.5: (1,(p''',p^{iv},\l)_I,4)$ &&\\
$p1.6: (1,(p''',Y,\l)_I,2)$ & &\\
\hline
$q1.1: (1,(X,q,\l)_I,2)$ & $q2.1: (2,(\l,X,\l)_D,1)$ &$q3.1: (3,(q',b,\l)_I,1)$\\
$q1.2: (1,(q,q',\l)_I,3)$ & & \\
$q1.3: (1,(q',Y,\l)_I,4)$ & & \\
\hline
$h1.1: (1,(S',h,\l)_I,2)$ &$h2.1: (2,(\l,S'h,\l)_D,1)$ &\\
\hline
$f1.1: (1,(A,f,\l)_I,6)$ &$f2.1: (2,(\l,Af',\l)_D,1)$& \\
$f1.2: (1,(\l,fB,\l)_D,2)$ &&\\
\hline
$g1.1: (1,(C,g,\l)_I,6)$ &$g2.1: (2,(\l,Cg',\l)_D,1)$& \\
$g1.2: (1,(\l,gD,\l)_D,2)$ &&\\
\hline
\hline
{\bf  $C4$}&{\bf  $C5$}&{\bf  $C6$}\\
\hline 
 $p4.1: (4,(p'',p''',\l)_I,1)$ & $p5.1: (5,(b,p'',\l)_I,1)$ & \\
$p4.2: (4,(\l,p^{iv}p^{v},\l)_D,1)$
&& \\
\hline
$q4.1: (4,(\l,qq',\l)_D,1)$ && \\
\hline && \\
\hline
&&$f6.1: (6,(B,f',\l)_I,1)$ \\ \hline
&&$g6.1: (6,(D,g',\l)_I,1)$ \\ \hline

\end{tabular}}}
\caption{Star-controlled $\mathrm{GCID}_S(6;1,1,0;2,0,0)$ simulating the rules of SGNF.
}
\label{table-RE-6110200}
\end{center}
\end{table}

\begin{theorem} \label{RE6110200} $\mathrm{RE} = \mathrm{GCID}_S(6;1,1,0;2,0,0)$ and $\mathrm{RE} =\mathrm{GCID}_S(6;1,0,1;2,0,0)$.  \end{theorem}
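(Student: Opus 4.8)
The plan is to establish the two equalities by simulating a type-0 grammar $G=(N,T,P,S)$ in SGNF with a star-controlled GCID system $\Pi$ of size $(6;1,1,0;2,0,0)$, whose rules are exactly those collected in Table~\ref{table-RE-6110200}; the second equality then follows immediately from Proposition~\ref{cor-reversal}(2), since reversing all contexts turns a size-$(6;1,1,0;2,0,0)$ system into a size-$(6;1,0,1;2,0,0)$ one. The inclusion $\mathrm{GCID}_S(6;1,1,0;2,0,0)\subseteq\mathrm{RE}$ is standard (a GCID system is a special type of matrix-like rewriting device and can be simulated by a Turing machine), so the work is the converse inclusion. The axiom of $\Pi$ is the start word of $G$ (i.e.\ $S$), the alphabet $V$ is $N\cup T$ together with all marker symbols $p,p',p'',p''',p^{iv},p^{v}$ (one such family per context-free rule of the first type), $q,q'$ (per rule of the second type), $h$, $f,f'$, $g,g'$, and $C1$ is both the initial and the only final component.

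The core of the proof is a rule-by-rule simulation argument, showing that each derivation step of $G$ is faithfully mimicked by a block of steps of $\Pi$ that begins and ends in $C1$, and conversely that no ``unintended'' derivation of $\Pi$ can produce a terminal word. For a rule $p\colon X\to Yb$ with $b\in T\cup\{B,D\}$, I would trace the intended computation
\[
(\alpha X\beta)_1\Ra_{p1.1}(\alpha Xp\beta)_2\Ra_{p2.1}(\alpha p\beta)_1\Ra_{p1.2}(\alpha pp'\beta)_3\Ra_{p3.1}(\alpha pp'p^{v}\beta)_1\Ra_{p1.3}(\alpha pp'bp^{v}\beta)_5\Ra_{p5.1}\cdots
\]
and so on through $p1.4,p1.5,p4.1,p4.2,p1.6,p2.2$, verifying that the markers are introduced and removed in a forced order (each marker's deletion rule requires a context created only by the previous step) so that exactly $X$ is replaced by $Yb$ and the string returns to $C1$ with no markers left. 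The rules $q\colon X\to bY$ with $b\in\{A,C\}$ are simulated by the simpler block $q1.1,q2.1,q1.2,q1.3,q3.1,q4.1$; the rule $S'\to\lambda$ by $h1.1,h2.1$; and the non-context-free rules $AB\to\lambda$ and $CD\to\lambda$ by the $f$- and $g$-blocks, where the deletion of the length-two string is actually performed by $f1.2\colon(1,(\lambda,fB,\lambda)_D,2)$ after $f$ has been glued next to $A$ — here the length-two deletion parameter $m=2$ is genuinely used.

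The main obstacle — and the part that needs the most careful writing — is the soundness direction: ruling out spurious derivations. Two dangers must be handled. First, since $\Pi$ is star-controlled, every step alternates between $C1$ and a leaf, so one must check that whenever a marker of one rule-block has been introduced, the only applicable leaf rule is the intended continuation of that block, never a rule of another block; this is ensured by making each intermediate deletion rule check for a two-symbol context containing a primed marker unique to its block. Second, and more subtly, one must verify that markers cannot be ``reused across derivation steps'': e.g.\ that after $X$ has been rewritten one cannot re-enter the same $p$-block with a leftover marker and thereby simulate $X\to YbYb$ or similar — this is exactly the kind of bug flagged in the commented-out tables, and is avoided here because the $C2$ rule $p2.1$ can only delete a genuine nonterminal $X\in N'$ (of which SGNF guarantees at most one is present in phase~I), and because phase-II words contain no symbol from $N'$ at all, so the $p,q,h$ blocks are simply inapplicable once $AB$ or $CD$ appears in the center. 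Assembling these observations into a clean invariant — ``at every visit to $C1$, the string is either a sentential form of $G$, or a sentential form with exactly one partially-processed rule-block whose markers determine uniquely how the computation must continue'' — and then an induction on derivation length in both directions, completes the proof. \qed
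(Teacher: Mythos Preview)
Your overall architecture matches the paper's: simulate a SGNF grammar with exactly the rules of Table~\ref{table-RE-6110200}, trace the intended blocks for $L(G)\subseteq L(\Pi)$, argue soundness for $L(\Pi)\subseteq L(G)$, and invoke Proposition~\ref{cor-reversal} for the mirrored size. A few surface slips first: you have interchanged the $p$- and $q$-types (in the paper $p$ labels $X\to bY$ with $b\in\{A,C\}$ and $q$ labels $X\to Yb$; the $p$-block in the table produces $bY$, not $Yb$); your listed rule orders violate the star alternation (after $q1.2$ one is in $C3$, so $q3.1$ must precede $q1.3$; likewise $p4.1$ must precede $p1.5$); and since $j'=j''=0$ every deletion rule here is context-free, so the ``check'' you mention lives in the length-two deletion \emph{string} (e.g.\ $pp'$, $p''p'''$, $qq'$), not in any context.

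The substantive gap is in the soundness sketch. Your key claim, that once a marker of some block is present ``the only applicable leaf rule is the intended continuation of that block'', is false for this system, and the paper's proof is largely devoted to repairing exactly this. Three concrete failures of your claim: (i)~$f1.1$ and $g1.1$ are applicable throughout phase~I, since $\alpha\in\{A,C\}^*$ always offers an~$A$ or~$C$; (ii)~in $C5$ the rule $p5.1$ may attach $p''$ after \emph{any} occurrence of $b\in\{A,C\}$, including one inside~$\alpha$ rather than the freshly inserted one; (iii)~$q1.2$ can be re-applied after $q3.1$, producing strings like $\alpha qq'q'bb\beta$. None of these are excluded by the uniqueness of the $N'$-symbol.

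The paper handles (i) by introducing a \emph{balance condition} (equal numbers of $\{f,g\}$- and $\{f',g'\}$-occurrences) together with an erasing morphism~$\mu$ on $\{f,f',g,g'\}$, and proves the invariant that whenever $(w)_1$ is reachable with only $f,f',g,g'$ as extra symbols, $\mu(w)$ is a sentential form of~$G$; premature $f/g$-insertions are then shown to commute past the context-free simulation. For (ii) and (iii) the paper does explicit case analyses (labelled $(*)$ and $(+)$ for the $p$-block, and a ``rule inversion'' argument for the $q$-block) showing that misplaced markers either can never be removed, or that the resulting string is also reachable along an intended path. Your proposed invariant (``sentential form of~$G$, or one partially-processed block'') is not preserved under these detours, so an induction on it does not close; you need the weaker $\mu$-based invariant plus the separate case work.
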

\begin{proof} Consider a type-$0$ grammar $G=(N,T,P,S)$ in SGNF as in Definition~\ref{def-SGNF}. 
\LV{The rules of $P$ are\LV{ assumed to be} labelled bijectively with labels from $[1 \ldots |P|]$. }We construct a\LV{ star-controlled} GCID system $\Pi= (6,V,T,\{S\},H,1,\{1\},R)$ of size $(6;1,1,0;2,0,0)$\LV{, with a control graph shown in Fig.~\ref{fig4star}, as follows} such that $L(\Pi) = L(G)$. The alphabet $V$  contains rule markers, apart from the symbols of~$G$. More specifically, for each rule $\gamma\to\delta\in P$ labeled~$r$, we have $r\in V$. Moreover, if $\gamma\to\delta\neq S'\to\l$, we also have single-primed markers $r'\in V$. Finally, for context-free rules of the form $X\to bY$, even markers $r'', r''', r^{iv},r^v\in V$.
We refer to Table~\ref{table-RE-6110200}, showing the simulation of the different rule types  of SGNF. 
The columns of the table correspond to the components of~$\Pi$. The rows of Table~\ref{table-RE-6110200} correspond to the rules simulating the `linear rules' $p: X \to bY$ and $q: X \to Yb$, with $X\in N'$ and $b\in N''\cup T$, $h: S' \to \l$, as well as $f: AB \to \l$ and $g: CD \to \l$. 
We now prove that $L(G) \subseteq L(\Pi)$ as follows. We show that
if $w\Ra w'$ in~$G$, then $(w)_1\Ra_* (w')_1$ according to $\Pi$. The claim then follows by induction.


\smallskip
\noindent\underline{Context-free rule $q:X\to Yb$.} Here, $w=\a X\b$ and $w'=\a Yb\b$ for some $\a\in \{A,C\}^*$, $\b\in (\{B,D\}\cup T)^*$. The simulation performs as follows:
\begin{eqnarray*}(\a X\b)_1 &\Ra_{q1.1}& (\a Xq\b)_2 \Ra_{q2.1} (\a q\b)_1 \Ra_{q1.2}
 (\a qq'\b)_3 \Ra_{q3.1} (\a qq'b\b)_1 \\& \Ra_{q1.3} &
 (\a qq'Yb\b)_4 \Ra_{q4.1}  (\a Yb\b)_1 \,.\end{eqnarray*}
\smallskip
\noindent\underline{Context-free rule $p:X\to bY$.} Here, $w=\a X\b$ and $w'=\a bY\b$ for some $\a\in \{A,C\}^*$, $\b\in (\{B,D\}\cup T)^*$. The simulation performs as follows:
\begin{eqnarray*}(\a X\b)_1 &\Ra_{p1.1}& (\a Xp\b)_2 \Ra_{p2.1} (\a p\b)_1 \Ra_{p1.2}
 (\a pp'\b)_3 \Ra_{p3.1} (\a pp'p^v\b)_1  \\
 & \Ra_{p1.3}&(\a pp'bp^v\b)_5 \Ra_{p5.1} (\a pp'bp''p^v\b)_1 \Ra_{p1.4}
 (\a bp''p^v\b)_4 \Ra_{p4.1} (\a bp''p'''p^v\b)_1 \\  &\Ra_{p1.5}&(\a bp''p'''p^{iv}p^v\b)_4 \Ra_{p4.2}  (\a bp''p'''\b)_1 \Ra_{p1.6} (\a bp''p'''Y\b)_2 \Ra_{p2.2} (\a bY\b)_1 \,.\end{eqnarray*}


One might wonder that the $h: S'\to\l$ and the $f: AB \to \l$  rules could easily be simulated by the rules $(1,(\l,S',\l)_D,1)$ and $(1,(\l,AB,\l)_D,1)$, respectively. However, the underlying control graph of our star-controlled GCID forbids loops and hence, we have given a different simulation for these rules. Since the correctness of the $h$-rule simulation is trivial, it remains to discuss the simulation of the  rule $f: AB \to \l$. 
The simulation of $g: CD\to \l$ is similar and hence omitted.

\smallskip
\noindent\underline{Non-context-free rules $f:AB\ra\l$.}
This means that we expect $w=\a AB\b$ and $w'=\a\b$ for some $\a\in \{A,C\}^*$, $\b\in (\{B,D\}\cup T)^*$. 
Within $\Pi$, this can be simulated as follows.
\begin{align*}(\a AB\b)_1 \Ra_{f1.1} (\a AfB\b)_6 \Ra_{f6.1}
(\a AfBf'\b)_1 \Ra_{f1.2}
 (\a Af'\b)_2 \Ra_{f2.1} (\a \b)_1\,.
 \end{align*}

\smallskip
Conversely, a derivation $(w)_1\Ra^*(w')_1$, with $w\neq w'$ and $w,w'\in \{A,C\}^*\linebreak[3](N'\cup\{\emptyword\})(\{B,D\}\cup T)^*$  has to start like $(w)_1\Ra (v)_j$ in~$\Pi$. If some rule from $C1$ is applied to $w$, the rule will insert a rule marker into the string~$w$ and branch to $C2$ (when simulating context-free rules) or to $C3$ (when simulating non-context-free rules). The introduction of rule markers in $C1$ will take care of the non-interference among the non-context-free and context-free rules. We now discuss the possibilities in detail. In our discussion, we distinguish between sentential forms containing or not containing a symbol from $N'$. 

Our inductive arguments will also show that, in~$\Pi$, no sentential form is derivable that contains two occurrences of symbols from~$N'$. More generally, we can show the following. 
Assume that we can derive some configuration $(w)_1$ in~$\Pi$ such that the string~$v$ contains no marker symbols, where $v=\mu(w)$ is obtained by applying the morphism~$\mu$ that acts like the identity on~$V$ apart from the letters $f,f',g,g'$ that will be erased. Then the sentential form~$v$ is also derivable in~$G$. In particular, if~$w\in T^*$, then $v=w$, i.e., each word in $L(\Pi)$ also belongs to $L(G)$.

We will also prove by induction that, if $(w)_1$ is derivable in~$\Pi$ and if $w$ contains at most one occurrence of~$N'$ and no markers but possibly $f,f',g,g'$, then also $\mu(w)$ is derivable in~$\Pi$ and then, only using the rules $f1.1$, $f6.1$, $g1.1$ and $g6.1$, we can derive~$\mu(w)$ in~$\Pi$ from~$w$. This also means that, in each such string~$w$ derivable in~$\Pi$, the number of occurrences of symbols from $\{f,g\}$ equals the number of occurrences of symbols from $\{f',g'\}$. We also call this the \emph{balance condition}. Therefore, we can start our inductive hypothesis with strings that can be derived both in~$G$
 and in~$\Pi$ and observe the maintenance of the balance condition along our arguments.

Let us first assume (by induction) that the sentential form $w^1 = \a X\b$ for some $\a\in \{A,C\}^*$, $\b\in (\{B,D\}\cup T)^*$ and $X \in N'$ is derivable in~$G$ and the configuration $(w^1)_1$ is derivable in~$\Pi$. We will prove (as induction step) that if $(w^1)_1\Ra_*(u)_1$, $w^1\neq u$, and $u\in \{A,C\}^*(N'\cup\{\lambda\})(\{B,D\}\cup T)^*$ is the first sentential form from $\{A,C\}^*(N'\cup\{\lambda\})(\{B,D\}\cup T)^*$ that appears in a derivation of~$\Pi$ in $C1$, then $w^1\Ra u$ holds in~$G$, except from a premature start of simulating non-context-free rules\LV{ that we} also discuss below and where we argue that the balance condition is maintained.


\smallskip\noindent
\underline{Applying $f1.1$} to $w^1 = \a X\b$ for some $\a\in \{A,C\}^*$, $\b\in (\{B,D\}\cup T)^*$ and $X \in N'$, the only nonterminal from $N'$, is possible but unintended. (A similar discussion applies to $g1.1$.)
A successful application yields the configuration $(w^2)_6$, with $w^2=\alpha_1 Af\alpha_2 X\b$, where
$\alpha_1 A\alpha_2=\alpha$.
Now, the only applicable rules are $f6.1$ or $g6.1$.
We get a configuration $(w^3)_1$ with $\mu(w^3)=w^1$. It can be observed that on $w^3$, neither $f1.2$ nor $g1.2$ are applicable, as these rules require $AB$ or $CD$ sitting in the center of $w^1$, which was not the case by assumption. Therefore, we could either continue our journey with inserting further markers $f,f',g,g'$ (but always maintaining the balance condition) or finally apply $r1.1$, belonging to a context-free rule~$r$. Now, observe that, instead of applying $r1.1$ and then $r2.1$ (because $f2.1$ and $g2.1$ are inapplicable), yielding a configuration $(u)_1$, we could also first apply $r1.1$ and $r2.1$ to $w^1$, and then the same $f$- and $g$-simulation rules as before, arriving at $(u)_1$ in a different way. This proves (here as part of  the induction step) that, as claimed, we  can exchange the sequence of rule applications in a way that we apply $f$- and $g$-simulation rules after the other rules that are meant to simulate the context-free rules. We also see by induction that the balance condition is always maintained.


\smallskip\noindent
\underline{Applying $q1.1$} to $w^1 = \a X\b$ for some $\a\in \{A,C\}^*$, $\b\in (\{B,D\}\cup T)^*$ and $X \in N'$, the only nonterminal from $N'$, yields the configuration $(w^2)_2$ with $w^2=\a Xq\b$. In $C2$, all rules (except $q2.1$ and $p2.1$) are guarded by markers and the only applicable rule are $q2.1$ or $p2.1$ (in case there is a rule $X \to b'Y'$ in~$P$) which delete~$X$, yielding $(w^3)_1=(\a q\b)_1$.  Due to the rule markers, apart from the intended rule $q1.2$, one could also apply $f1.1$ or $g1.1$. In that case, having moved to $C6$, $f6.1$ or $g6.1$ are applicable, yielding some configuration $(w')_1$ with $\mu(w')=w^3$. As discussed before, we can even continue like this, but in order to make any further progress, we will have to apply $q1.2$ in some configuration $(w'')_1$ with $\mu(w'')=w^3$. As we could also apply the non-context-free simulation rules afterwards, it suffices to discuss what happens if we apply $q1.2$ to $w^3$ as intended. Hence, we arrive at the configuration $(w^4)_3$ with $w^4=\a qq'\b$. 
The required rule markers cause 
$q3.1$ to be the only applicable rule as desired. 
Therefore, we arrive at the configuration $(w^5)_1=(\a qq'b\b)_1$. Clearly, one could now (again) apply $f1.1$ or $g1.1$, but this would only lead to prematurely introducing the markers $f,f',g,g'$ similar as discussed before, again always maintaining the balance condition. Therefore, the only applicable rule that needs to be discussed (apart from the intended one, which is $q1.3$) is $q1.2$ (again). With the string $\a qq'q'b\b$, we are back to $C3$. Now, there are two possible subsequent configurations: (a) $(\a qq'bq'b\b)_1$, or (b) $(\a qq'q'bb\b)_1$. In Case (a), we claim that there is no way to delete the second occurrence of $q'$ in the future. Namely, the only way to delete $q'$ is if left to it, $q$ is sitting. But as now some $b\in\{B,D\}\cup T$ is to the left of $q'$, there is no way to introduce $q$ in this position later, because the marker~$q$ always works as a symbol that replaces the former $N'$-symbol. Therefore,  a derivation following (a) cannot terminate. The situation is different in Case (b). For instance, we can apply $q1.3$ to string $\a qq'q'bb\b$, followed by $q4.1$.
Again, we have two configurations to study: (A)  $(\a q'Ybb\b)_1$ or (B)  $(\a Yq'bb\b)_1$. In Case (A), we can argue similarly to Case (a) above to see that this configuration cannot lead to a terminal string: left to~$q'$ will sit some symbol $A$ or~$C$. Case (B) is indeed different. Assuming that only rules of the form $Z\to b'Z'$ are simulated subsequently, there may be a derivation $Y\Ra^*\gamma X$ with $\gamma\in\{A,C\}^+$ that is simulated by the GCID system as intended. Hence, we see now a configuration $(\a \gamma Xq'bb\b)_1$ and then, after a short excursion into $C2$, we see $(\a \gamma qq'bb\b)_1$. Now, we can actually terminate by using the rules $q1.3$ and $q4.1$, leading to $(\a \gamma Ybb\b)_1$.
However, we would arrive at the same string if we had followed our intended plan. Then, we could get from  $(\a Yb\b)_1$ via $(\a \gamma Xb\b)_1$ to $(\a \gamma qb\b)_1$. Now, after applying $q1.2$ and $q2.2$ as intended, we can also see $(\a \gamma qq'bb\b)_1$ and continue as above. This argument is also valid (by a separate yet straightforward induction) if we happen to produce a string $(\a q(q')^kb^k\b)_1$ for an arbitrary $k>1$. Therefore, we can avoid this process that we call \emph{rule inversion}, and always follow our standard derivation instead.
We can hence assume that we apply $q1.3$ to $w^5$ as desired. 
Therefore, we arrive at the configuration $(w^5)_1=(\a qq'Yb\b)_4$. If we actually apply $q4.1$, then we arrive at $(\a Yb\b)_1$ as intended, proving the inductive step in this case. 
But by the very structure of this component, no other rules are applicable.

\smallskip\noindent
\underline{Applying $p1.1$} to $w^1 = \a X\b$ for some $\a\in \{A,C\}^*$, $\b\in (\{B,D\}\cup T)^*$ and $X \in N'$, the only nonterminal from $N'$,  will insert a marker $p$ to the right of $X$, yielding $(w^2)_2=(\a Xp\b)_2$. Recall that we are trying to simulate the rule $X\to bY$ for some $X,Y\in N'$ and $b\in\{A,C\}$.
In $C2$, all rules (except $p2.1$ and $q2.1$) are guarded by markers and the only applicable rule are $q2.1$ or $p2.1$ (in case there is a rule $X \to b'Y'$ in $P$) which delete the nonterminal~$X$, yielding a string $w^3=\a p\b$, i.e., we arrive at the configuration $(w^3)_1$. \LV{The application of $p1.1$ and $p2.1$ in sequence corresponds to the direct simulation of the rewriting rule $X \to p$. The string is now in $C1$. }Since $X$ is deleted in the previous step and there is no $p', p''$, the only applicable rule is $p1.2$ which inserts a $p'$ after $p$, yielding the configuration $(w^4)_3=(\a pp'\b)_3$\LV{, moving $w^3$ into $C3$}. In $C3$, guarded by rule markers, we have to apply $p3.1$ as intended. 
Hence, we arrive at $(w^5)_1=(\a pp'p^v\b)_1$.
If we re-apply $p1.2$, we achieve an imbalance of the number of occurrences of~$p$ and~$p'$. This is problematic insofar, as $pp'$ is deleted together. Also, we would have to then re-apply $p3.1$ again
, creating another imbalance.
Alternatively, we could apply $f1.1$ or $g1.1$, which 
introduces a  pair of non-primed and primed $f$/$g$-markers prematurely, but maintaining their balance
. \LV{All these considerations }\SV{This }bring\SV{s} us to the conclusion that we should\LV{ rather} apply $p1.3$ in $(w^5)_1$. 

Hence, we arrive at  $(w^6)_5=(\a pp'bp^v\b)_5$, with $b\in \{A,C\}$. In $C5$, we have to apply a rule that puts some marker $r''$ to the right of an occurrence of~$b$. As $b\in \{A,C\}$, the $b$ occurring between $p'$ and $p^v$ is the rightmost of all occurrences of $A,C$ within $w^6$. 
Let us first discuss what happens if we do apply some $r5.1$ (but possibly $r\neq p$) to this described rightmost occurrence and mark the situation when some~$b$ within~$\a$ is affected as $(*)$, not to forget its discussion. We get to $(w^7)_1=(\a pp'br''p^v\b)_1$. Apart from now applying $f1.1$ or $g1.1$, we could also apply $p1.2$ or $p1.3$, and finally also $p1.4$ (as intended). The scenario of prematurely introducing the $f$/$g$-markers has been sufficiently discussed before. If we apply rule~$p1.2$, we again create an imbalance concerning $p/p'$. Let us defer the discussion of applying $p1.3$ at this configuration  $(w^7)_1$ a bit as $(+)$; we rather discuss applying $p1.4$. We arrive at the configuration 
$(w^8)_4=(\a br''p^v\b)_4$. Now, only $r4.1$ would be applicable, leading to $(w^9)_1=(\a br''r'''p^v\b)_1$.
We are now in safer waters, as 
we have to use rule $r1.5$ to get to $(w^{10})_4=(\a br''r'''r^{iv}p^v\b)_4$, because if we apply $p1.6$ directly, we have no chance to delete $p^v$ in the future. If we apply $r4.1$ again on $w^{10}$, we create an imbalance between the number of $r''$ and $r'''$, but this balance is necessary for deleting these markers in~$C2$.
By using $r4.2$ alternatively on $w^{10}$, one can see that the only chance to continue the route is when we have $r=p$.
In that case, we move to $C1$ with $w^{11}=\a bp''p'''\b$. If we now re-apply $p1.5$,  at $C4$, we have to apply $p4.1$ and create an imbalance between $p''$ and $p'''$, hence preventing us from a terminating derivation. If we introduce $f$- or $g$-markers, we are  forced  to introduce primed versions in $C6$; we have discussed these premature but balanced introductions of these markers before. 
Hence, we have to discuss applying $p1.6$ as intended.
We enter $C2$ with $w^{12}=\a bp''p'''Y\b$.
Now, we can either delete $Y$ with some fitting rule $s2.1$ and return to the configuration $(w^{11})_1$, hence making no progress, or we apply $p2.2$ as intended, finally getting to the configuration $(w^{12})_1=(\a bY\b)_1$ as desired. 

In order to conclude that the induction step has been shown, we still have to consider two scenarios, marked as $(*)$ and $(+)$ above. 
In\LV{ scenario} $(+)$, we look at \SV{$}\LV{$$}(w^7)_1=(\a pp'br''p^v\b)_1\Ra_{p1.3}(\a pp'bbr''p^v\b)_5\,.\LV{$$}\SV{$}
Assume we apply a rule $s5.1$ next. As  the case when we find $b\in\{A,C\}$ within $\a$ is similar to the discussion $(*)$ that is still to come, we focus on two cases of configurations: (1) $(\a pp'bs''br''p^v\b)_1$ or (2) $(\a pp'bbs''r''p^v\b)_1$.
In both configurations, we can again apply $p1.3$, but this makes the whole case fail even more. 
We can now derive (under the conclusion that $r=p$) in the same way as in the main line of derivation, leading to $(\a bs''bY\b)_1$ (Case (1)) or to 
$(\a bbs''Y\b)_1$ (Case (2)). In both cases, there is no way to make use of $s''$, because this means we have to move to $C4$, or we mis-use another $p$-type rule at some point, when $p1.4$ makes us enter $C4$ again, but then continuing with the $s$-markers (using $s4.1$). Let us clarify this by assuming that we simulate $t:Y\to b'Y'$ next. Following the standard simulation up to $t1.4$, we get  $(\a bs''bb't''t^v\b)_4$ (Case (1)) or 
$(\a bbs''b't''t^v\b)_4$ (Case (2)). We could now use $s4.1$, $s1.6$ and $s2.2$ to introduce another nonterminal from $N'$ at the position of $s''$, but behold: we have now another left-over double-primed marker $t''$ whose removal can only be achieved by switching between two rule simulations in the `next round'. Therefore, we will never be able to terminate this derivation.

For scenario $(*)$, we reconsider  $(w^6)_5=(\a pp'bp^v\b)_5$, with $\a=\a_1 b\a_2$, so that for a suitable rule~$r$ that should introduce $Z\in N'$,  $(w^7)_1=(\a_1 br''\a_2 pp'bp^v\b)_1$. We could try to continue with $(w^7)_1\Ra_{p1.4}$
$$ (\a_1 br''\a_2 bp^v\b)_4\Ra_{r4.1}(\a_1 br''r'''\a_2 bp^v\b)_1\Ra_{r1.6} (\a_1 br''r'''Z\a_2 bp^v\b)_2\Ra_{r2.2} (\a_1 bZ\a_2 bp^v\b)_1$$
but then there is never a chance to lose $p^v$ again
. Therefore, also this scenario will never see a  derivation producing a terminal string.

\smallskip\noindent
We are now discussing a string $w$ derivable in $G$ and as configuration $(w)_1$ in~$\Pi$, with $w=\a AB\b$, with $\a\in \{A,C\}^*$ and $\b\in(\{B,D\}\cup T)^*$. The case of a string of the form $\a CD\b$ can be discussed in a very similar fashion.
First observe that we cannot apply any rule $p1.x$ or $q1.x$ or $h1.x$ due to the absence of nonterminals from $N'$ or of required markers.
We could in fact start with $g1.1$, followed by $g6.1$, and even repeat this, so that some $g$-markers are attached to $C$-occurrences. Similarly\LV{ as argued above}, we can consider such derivations to occur prematurely, because finally we have to use the $f$-rule as explained next.

\smallskip\noindent
\underline{Applying $f1.1$} to $w=\a AB\b$, we get a string $w_1$ by inserting~$f$ anywhere after an $A$-occurrence within~$w$. Let $\a A=\a_1A\a_2$ indicate this position, i.e., $w_1=\a_1 Af\a_2B\b$.
$w_1$ is transferred to component~$C6$. So, $f6.1$ is applied and the 
string, yielding $w_2=\alpha_1 Af\a_2\b_1 Bf'\beta_2$, which enters $C1$, where $B\b=\b_1 B\beta_2$. Notice that the configuration $(w_2)_1$ could have also been created by  a premature application of $f1.1$ and $f6.1$ in some earlier phase of the derivation. This explains how a string that satisfies the balance condition could finally yield a terminal string, although it is not following the standard simulation as described in the beginning of the proof.
As $\a_2$ cannot contain any $B$-occurrence, now applying $f1.2$ necessitates $\a_2=\b_1=\emptyword$, and then, $w_3=\a_1 Af'\b_2$ is sent to $C2$.
There, the only applicable rule is $f2.1$ as intended, producing $w_4=\a_1\b_2$, sent to $C1$ as intended. As mentioned at several places, instead of applying $f1.2$ on $w_2$, one could also possibly apply $f1.1$ again, or also $g1.1$. We can consider all these attempts as premature ones, they only affect the left part of the string and have to be finally successfully matched by using rules $f1.2$ or $g1.2$, followed by executing another deletion rule in $C2$.

\smallskip
\noindent
It could be that a string $w=\a\b$ was derived in~$G$ (and hence possibly the configuration $(w)_1$ in~$\Pi$ by induction) with $\a\in\{A,C\}^*$ and $\b\in(\{B,D\}\cup T)^*$ and neither $\a$ ends with $A$ and $\b$ starts with~$B$ nor $\a$ ends with~$C$ and $\b$ starts with~$D$. We can still apply rules $f1.1$ or $g1.1$, moving the resultant string to $C6$, where $f6.1$ or $g6.1$ are applicable, moving us back to $C1$. Yet, the crucial observation is that neither $f1.2$ nor $g1.2$ are ever applicable now, as they require the presence of the substring $AB$ or $CD$ (within~$w$), respectively. Only then, the substrings $fB$ or $gD$ can be created.

\smallskip
This concludes our argument concerning the inductive step of the correctness proof of our suggested simulation.
%

Finally, Proposition~\ref{cor-reversal} shows that  star-controlled GCID systems of size $(6;1,0,1;\linebreak[2]2,0,0)$ are computationally complete, as well.  
\end{proof}
\newcommand{\remarkonprule}{\begin{remark}\label{rem:p-notsimple}
The reader might wonder if it would be possible to merge some of the components of the previous construction (Theorem \ref{RE6110200}), but this will create malicious derivations in each case. 
Also, the simulation of a $p$-rule cannot follow \LV{the same}\SV{a} simple pattern as that of the $q$-rule (see Rem.~\ref{rem:p-notsimple}), as we want to avoid the derivation of strings with more than one occurrence of a symbol from~$N'$. 
Here, we explain why a simple, not complex looking and seeming correct $p: X \to bY$ rule simulation does not work with the size $(5;1,1,0;2,0,0)$. Consider, if we attempt to construct a $p$-rule simulation for $\Pi$ as in Table~\ref{table-attempt-prule-5C}. 

\begin{table}[tb]
\begin{center}
\resizebox{.95\linewidth}{!}{
\begin{tabular}{|c|c|c|c|c|}
\hline
{\bf Component C1}&{\bf Component C2}&{\bf Component C3}&{\bf Component C4}&{\bf Component C5} \\
\hline
$p1.1: (1,(X,p,\l)_I,2)$ & $p2.1: (2,(\l,X,\l)_D,1)$ & $p3.1: (3,(\l,p'',\l)_D,1)$& $p4.1: (4,(p,p',\l)_I,1)$ & $p5.1:(5,(\l,pp',\l)_D,1)$\\ 
$p1.2: (1,(p,p'',\l)_I,4)$ & & & & \\
$p1.3: (1,(p',b,\l)_I,5)$ & && &\\
$p1.4: (1,(p'',Y,\l)_I,3)$ &&& &\\
\hline
\end{tabular}}
\end{center}
\caption{A direct simulation attempt for a $p$-rule $p:X\to bY$.}
\label{table-attempt-prule-5C}
\end{table}
A sample derivation of $p$-rule with the size $(5;1,1,0;2,0,0)$ is as follows. 
\beq (\a X \b)_1 \Ra_{p1.1} (\a Xp \b)_2 \Ra_{p2.1} (\a p \b)_1 \Ra_{p1.2} (\a pp''\b)_4 \Ra_{p4.1} (\a pp'p''\b)_1 \Ra_{p1.3}  (\a pp'bp''\b)_5 \\ \Ra_{p5.1}  (\a bp''\b)_1 \Ra_{p4} (\a bp''Y\b)_3 \Ra_{p1.4}  (\a b Y \b)_1.   
\eeq
However, the simulation does not always work in the intended way as one need not apply p1.3 and instead $p1.4$ can be applied first. Therefore, the corresponding $b$ is not inserted, however the $Y$ has been inserted. With suitable a $Y$-rule that (finally) creates~$X$ again, later one could eliminate the markers $pp'$ together and that will be a problem as a malicious string could be generated. For example, consider the grammar $G$ contains the rules $p: X \to bY$ and $u: Y \to b'X$, $b,b'\in\{A,C\}$, $b\neq b'$, besides some other rules. Then, with the rules of Table \ref{table-attempt-prule-5C}, we can have the following derivation.  
\beq (\a X \b)_1 \Ra_{p1.1,p2.1,p1.2,p4.1,p1.4,p3.1} (\a pp'Y\b)_1 \Ra^{*\, \text{simulating}}_{Y\to b'X} (\a pp' b'X \b)_1 \Ra_{\text{as earlier for}~X}^* (\a pp'b'pp'Y\b)_1 \\
\Ra_{p1.3}~ (\a pp'b'pp'bY\b)_5 \Ra_{p5.1} (\a b'pp'bY\b)_1 \Ra_{p1.3} (\a b'pp'bbY\b)_5 \Ra_{p5.1} (\a b'bbY\b)_1.   
\eeq
We are supposed to get $\a bb'bY \b$ with $G$, but we could derive $\a b'bbY \b$ with $\Pi$, but not in~$G$.
\end{remark}
}

\LV{\begin{remark}
In the simulations described in Table~\ref{table-RE-6110200}, may entries are empty. Therefore, it looks tempting to merge some of the components to save on their number. In this remark, we show that either attempt of doing so results in malicious derivations, i.e., terminal strings could be derived according to the star-controlled GCID system that are not derivable in the original SGNF grammar.
\smallskip\noindent
\underline{Merging $C5$ and $C6$} would allow to use $p1.3$ and second time to jump to $C5\&6$ and there add some primed $f$-marker. This leads us to a configuration like $(\a pp'bbp''p^v\b')_1$, where $\b'$ is obtained from the standard-$\b$ by inserting one $f'$-marker. \todo{oops, maybe this works after all.}
\end{remark}
}

\remarkonprule
Our next computational completeness result even further reduces the deletion complexity, making it context-free\LV{ at the expense of more context needed for the insertion rules}.

\begin{theorem}
\label{RE4211100}
$\mathrm{RE} = \mathrm{GCID_S(4 ;2,1,1;1,0,0)}$.
\end{theorem}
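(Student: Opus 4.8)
The plan is to show $\mathrm{RE}\subseteq\mathrm{GCID}_S(4;2,1,1;1,0,0)$ (the reverse inclusion is routine, since GCID systems only generate computable languages), by simulating a type-$0$ grammar $G=(N,T,P,S)$ in SGNF (Definition~\ref{def-SGNF}). I would build $\Pi=(4,V,T,\{S\},H,1,\{1\},R)$ with $C1$ the central, initial and final component, $C2,C3,C4$ the slaves, and $V$ obtained from $N\cup T$ by adjoining one (occasionally two) marker per rule of~$P$. Because the star control graph has no loops, every derivation alternates between $C1$ and a slave, so each simulated $G$-step must be realised by an \emph{even} number of $\Pi$-rules, beginning and ending in~$C1$. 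The resources are exactly what is needed: insertions of length up to~$2$ with one-sided context of length~$1$, and context-free deletions of length~$1$.

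A context-free rule $q:X\to Yb$ (with $X,Y\in N'$, $b\in T\cup\{B,D\}$) is simulated by $q1.1:(1,(X,q,\lambda)_I,2)$, $q2.1:(2,(\lambda,X,\lambda)_D,1)$, $q1.2:(1,(\lambda,Yb,q)_I,3)$, $q3.1:(3,(\lambda,q,\lambda)_D,1)$, giving $(\alpha X\beta)_1\Ra(\alpha Xq\beta)_2\Ra(\alpha q\beta)_1\Ra(\alpha Ybq\beta)_3\Ra(\alpha Yb\beta)_1$; here the pair $Yb$ is inserted in one step (this is where insertion length~$2$ is used) and the fresh marker~$q$ pins down the former position of~$X$. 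A rule $p:X\to bY$ (with $b\in\{A,C\}$) is symmetric, placing the marker to the \emph{left} of~$X$ and then inserting $bY$ to the right of it. The rule $h:S'\to\lambda$ cannot be realised by the single loop $(1,(\lambda,S',\lambda)_D,1)$, so I would use an even-length detour with a two-symbol auxiliary marker: $h1.1:(1,(\lambda,hh',S')_I,2)$, $h2.1:(2,(\lambda,S',\lambda)_D,1)$, $h1.2:(1,(\lambda,h,\lambda)_D,3)$, $h3.1:(3,(\lambda,h',\lambda)_D,1)$.

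For the erasing rules I would use that in an SGNF sentential form the only occurrence of an $A$ immediately followed by a $B$ is the one at the center; hence the context-sensitive insertion $f1.1:(1,(A,f,B)_I,2)$ places the marker~$f$ exactly there, after which $A$, $B$ and $f$ are removed one at a time by context-free deletions along a path back to~$C1$: $f2.1:(2,(\lambda,A,\lambda)_D,3)$, $f3.1:(3,(\lambda,B,\lambda)_D,4)$, $f4.1:(4,(\lambda,f,\lambda)_D,1)$; the rule $g:CD\to\lambda$ uses its own marker~$g$ and the same slave components. Note that neither $f1.1$ nor $g1.1$ can fire during phase~I, because there the unique symbol of~$N'$ separates the $A$'s and $C$'s from the $B$'s and $D$'s, which is precisely what keeps the two phases apart.

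The inclusion $L(G)\subseteq L(\Pi)$ then follows by induction from the displayed simulations. The harder direction $L(\Pi)\subseteq L(G)$ would be proved by induction on $\Pi$-derivations, using the invariants that every marker-free configuration $(w)_1$ has $w\in\{A,C\}^*(N'\cup\{\lambda\})(\{B,D\}\cup T)^*$ with at most one symbol from~$N'$, and that the numbers of occurrences of $A$ and $B$ (resp.\ of $C$ and $D$), up to markers of simulations in progress, stay in the balance dictated by~$G$ — a bookkeeping akin to the balance condition in the proof of Theorem~\ref{RE6110200}. The main obstacle, and the most delicate part, is that context-free deletions do not force $f2.1$/$f3.1$ (or $q2.1$, $g2.1$, \ldots) to erase the \emph{intended} occurrence of a symbol, and that the $f$- and $g$-simulations must share the three slave components and may try to interleave; for each such deviation one has to check that it either destroys the $A/B$ or $C/D$ count balance, so the derivation can never reach $T^*$, or is merely a harmless reordering of erasing steps whose net effect is again a $G$-reachable sentential form. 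Once these case distinctions are settled for all five rule types, the induction closes and, with the first inclusion, yields $\mathrm{RE}=\mathrm{GCID}_S(4;2,1,1;1,0,0)$.
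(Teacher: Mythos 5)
Your simulations of the context-free rules $p$, $q$, $h$ match the paper's construction almost verbatim and are fine, but your simulation of $f:AB\to\lambda$ has two genuine defects. First, the rules $f2.1:(2,(\lambda,A,\lambda)_D,3)$ and $f3.1:(3,(\lambda,B,\lambda)_D,4)$ send the string directly from $C2$ to $C3$ and from $C3$ to $C4$. This puts the edges $\{C2,C3\}$ and $\{C3,C4\}$ into the underlying undirected control graph, so the system is no longer star-controlled: by definition every rule must have $C1$ as source or target, and a derivation must alternate between the center and a slave. As written, your $f$-block is a cycle $C1\to C2\to C3\to C4\to C1$, i.e., a path/ring-controlled system, which is exactly the restriction the theorem is not about.

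Second, and more fundamentally, even after rerouting through $C1$ the construction is incorrect, and your proposed invariant cannot repair it. Your only check is that the substring $AB$ is present when $f1.1$ fires; afterwards $A$, $B$ and $f$ are erased by \emph{context-free} deletions that may hit any occurrence. Take a reachable but non-productive phase-II string such as $AABtB$ with $t\in T$ (in $G$ it only yields the dead end $AtB$). Your rules allow $AABtB\Rightarrow_{f1.1}AAfBtB$, then deletion of the \emph{first} $A$ and the \emph{last} $B$, then of $f$, giving $ABt$; one more round yields the terminal string $t$, which need not be in $L(G)$. Note that this malicious derivation removes exactly one $A$ and one $B$ per round, so the $A/B$ count balance you propose as the key invariant is preserved — it does not "destroy the balance", and the deviation is not a harmless reordering either. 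This is precisely why the paper's simulation of $f$ needs thirteen rules and seven markers: after each context-free deletion it returns to $C4$ and applies a context-sensitive \emph{insertion} ($f4.1:(4,(f',f''f^2,f)_I,1)$, resp.\ $f4.2:(4,(f^2,f'''f^3,f^4)_I,1)$) whose very applicability certifies that the two markers flanking the deleted symbol have become adjacent, i.e., that the occurrence erased was the one in the center. Without such an a-posteriori adjacency check (or some equivalent mechanism), the inclusion $L(\Pi)\subseteq L(G)$ fails, so the induction you sketch cannot be closed.
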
 
\LV{\begin{table}[bt]
\begin{center}
\begin{tabular}{|c|c|c|}
\hline
{\bf Component C1}&{\bf Component C2} & {\bf Component C3} \\
\hline
$p1.1: (1,(\l,p,X)_I,2)$ & $p2.1: (2,(\l,X,\l)_D,1)$ & $p3.1: (2,(\l,p,\l)_D,1)$\\ 
$p1.2: (1,(p,bY,\l)_I,3)$  & & \\
\hline
$q1.1: (1,(\l,q,X)_I,2)$ & $q2.1: (2,(\l,X,\l)_D,1)$ & $q3.1: (2,(\l,q,\l)_D,1)$ \\ 
$q1.2: (1,(q,Yb,\l)_I,3)$  & & \\
\hline
$h1.1: (1,(\l,hh',S')_I,2)$ & $h2.1: (3,(\l,S',\l)_D,1)$&  $h3.1: (3,(\l,h,\l)_D,1)$ \\ 
$h1.2: (1,(\l,h',\l)_D,3)$  & & \\
\hline
\end{tabular}
\caption{Star-controlled GCID of size $(3;2,1,1;1,0,0)$ simulating CF-rules\LV{ of SGNF}.} 
\label{table-CF-4211100}
\end{center}
\end{table}}
\LV{\begin{table}[bt]
\begin{center}
\resizebox{1.0\linewidth}{!}{
\begin{tabular}{|c|c|c|c|}
\hline
{\bf Component C1}&{\bf Component C2}&{\bf Component C3} & {\bf Component C4} \\
\hline
$f1.1: (1,(\l,f',A)_I,2)$ & $f2.1: (2,(A,f,B)_I,1)$ & $f3.1: (3,(3,(\l,f'',\l)_D,1)$ & $f4.1: (4,(f',f''f^{2},f)_I,1)$\\ 
$f1.2: (1,(\l,A,\l)_D,4)$ & $f2.2: (2,(B,f^4,\l)_I,1)$&  
$f3.2:  (3,(\l,f',\l)_D,1)$ &$f4.2: (4,(f^2,f'''f^3,f^4)_I,1) $\\
$f1.3: (1,(\l,f,\l)_D,2)$ & $f2.3:(\l,f^2,\l)_D,1)$ & &  \\
$f1.4: (1,(\l,B,\l)_D,4)$ &&& \\
$f1.5: (1,(\l,f^3,\l)_D,3)$&&&\\
$f1.6: (1,(\l,f''',\l)_D,3)$ &&&\\
$f1.7: (1,(\l,f^{4},\l)_D,2)$ & & & \\
\hline
\end{tabular}}\caption{Star-controlled GCID of size $(4;2,1,1;1,0,0)$ simulating non-CF-rules\LV{ of SGNF}.}
\label{table-nonCF-4211100-revised4}
\end{center} 
\end{table}}

\SV{\begin{table}[h!]
\begin{center}
\resizebox{1.0\linewidth}{!}{
\begin{tabular}{|c|c|c|c|}
\hline
{\bf Component C1}&{\bf Component C2}&{\bf Component C3} & {\bf Component C4} \\
\hline
$p1.1: (1,(\l,p,X)_I,2)$ & $p2.1: (2,(\l,X,\l)_D,1)$ & $p3.1: (2,(\l,p,\l)_D,1)$ &\\ 
$p1.2: (1,(p,bY,\l)_I,3)$  & & &\\
\hline
$q1.1: (1,(\l,q,X)_I,2)$ & $q2.1: (2,(\l,X,\l)_D,1)$ & $q3.1: (2,(\l,q,\l)_D,1)$ &\\ 
$q1.2: (1,(q,Yb,\l)_I,3)$  & & &\\
\hline
$h1.1: (1,(\l,hh',S')_I,2)$ & $h2.1: (3,(\l,S',\l)_D,1)$&  $h3.1: (3,(\l,h,\l)_D,1)$ &\\ 
$h1.2: (1,(\l,h',\l)_D,3)$  & & &\\
\hline
$f1.1: (1,(\l,f',A)_I,2)$ & $f2.1: (2,(A,f,B)_I,1)$ & $f3.1: (3,(3,(\l,f'',\l)_D,1)$ & $f4.1: (4,(f',f''f^{2},f)_I,1)$\\ 
$f1.2: (1,(\l,A,\l)_D,4)$ & $f2.2: (2,(B,f^4,\l)_I,1)$&  
$f3.2:  (3,(\l,f',\l)_D,1)$ &$f4.2: (4,(f^2,f'''f^3,f^4)_I,1) $\\
$f1.3: (1,(\l,f,\l)_D,2)$ & $f2.3:(\l,f^2,\l)_D,1)$ & &  \\
$f1.4: (1,(\l,B,\l)_D,4)$ &&& \\
$f1.5: (1,(\l,f^3,\l)_D,3)$&&&\\
$f1.6: (1,(\l,f''',\l)_D,3)$ &&&\\
$f1.7: (1,(\l,f^{4},\l)_D,2)$ & & & \\
\hline
\end{tabular}}\caption{Star-controlled GCID of size $(4;2,1,1;1,0,0)$ simulating rules of SGNF.}
\label{table-4211100}
\end{center} 
\end{table}}
\noindent
\begin{proof} Consider a type-$0$ grammar $G=(N,T,P,S)$ in SGNF as in Definition~\ref{def-SGNF}. \LV{The rules of $P$ are\LV{ assumed to be} labelled bijectively with labels from $[1 \ldots |P|]$. }We construct a\LV{ star-controlled} GCID system $\Pi= (4,V,T,\{S\},H,1,\{1\},R)$ of size $(4;2,1,1;\linebreak[2]1,0,0)$\LV{ as follows} such that $L(\Pi) = L(G)$. The set $V$ contains the symbols of $G$ as well as some rule markers. We refer to Table~\SV{\ref{table-4211100}}\LV{\ref{table-CF-4211100}} for the direct simulation\LV{ of context-free rules\todo{HF: By small modifications of these rules, it would be possible to reduce the size to  3210100 or to 3201100; this might help stitching new results.}} of SGNF\LV{ and further refer to Table~\ref{table-nonCF-4211100-revised4} for the direct simulation of non-context-free rules}.\LV{ The control graph is shown in Fig.~\ref{fig4star}.}
The  rules simulating $g: CD \to 
\l$ are similar to the ones simulating the $f: AB \to \l$ rule and hence omitted. 

We now prove that $L(G) \subseteq L(\Pi)$. We show that
if $w\Ra w'$ in~$G$, then $(w)_1\Ra_* (w')_1$ according to $\Pi$. From this fact, the claim follows by a simple induction, split into different cases, as discussed now. 

\smallskip
\noindent\underline{Context-free rule $p:X\to bY$.} Here, $w=\a X\b$ and $w'=\a bY\b$ for some $\a\in \{A,C\}^*$, $\b\in (\{B,D\}\cup T)^*$. The simulation performs as follows:
\begin{align*}
(w)_1=(\a X \b)_1  \Ra_{p1.1}    (\a pX\b)_2 \Ra_{p2.1} (\a p \b)_1 \Ra_{p1.2} (\a pbY \b)_3 \Ra_{p3.1} (\a bY\b)_1 =w'.
\end{align*}

\smallskip
\noindent\underline{Context-free rule $q:X\to Yb$} is simulated in a similar fashion to the simulation of the $p$-rule\LV{ described above  and hence the detailed description is omitted here}. 

\smallskip
\noindent\underline{Context-free rule $h:S'\to \emptyword$} is simulated as follows for some $\a\in \{A,C\}^*$, $\b\in (\{B,D\}\cup T)^*$:
$$(w)_1=(\a S'\b)_1 \Ra_{h1.1} (\a hh' S'\b)_2\Ra_{h2.1} (\a hh' \b)_1 \Ra_{h1.2} (\a h\b)_3 \Ra_{h3.1} (\a\b)_1=w'.$$

\smallskip
\noindent\underline{Non-context-free rule $f:AB\ra\l$.}
This means that we expect $w=\a AB\b$ and $w'=\a\b$ for some $\a\in \{A,C\}^*$, $\b\in (\{B,D\}\cup T)^*$. 
Within $\Pi$, this can be simulated as follows.
\begin{center}
\scalebox{.9}{
\begin{minipage}{\textwidth}
\begin{align*} 
(\a AB \b)_1 &\Ra_{f1.1} (\a f'AB \b)_2 \Ra_{f2.1} (\a f'AfB\b)_1 \Ra_{f1.2} (\a f'fB\b)_4 \Ra_{f4.1} (\a f'f''f^2fB\b)_1  \Ra_{f1.3} (\a f'f''f^2B\b)_{2} \\ & \Ra_{f2.2}  (\a f'f''f^2Bf^4 \b)_1  \Ra_{f1.4} (\a f'f''f^2f^4\b)_4 \Ra_{f4.2}  (\a f'f''f^2f'''f^3f^4 \b)_1  \Ra_{f1.5} (\a f'f''f^2f'''f^4 \b)_3 \\ & \Ra_{f3.1} (\a f'f^2f'''f^4 \b)_1 \Ra_{f1.6}  (\a f'f^2f^4 \b)_3 \Ra_{f3.2} (\a f^2f^4\b)_1\Ra_{f1.7}(\a f^2\b)_2\Ra_{f2.3}(\a \b)_1.
\end{align*}
\end{minipage}
}
\end{center}

\smallskip

We next prove that $L(\Pi) \subseteq L(G)$. More precisely, we show the following (by induction).
If $(u)_1$ is a configuration that is derivable in~$\Pi$ such that~$u$ contains the same number of markers from $\{f',g'\}$ as from $\{f^4,g^4\}$ (we call this property of the symbols from $\{f',g',f^4,g^4\}$ also a balanced situation) and such that the word~$u'$ that is obtained from~$u$ by deleting all symbols from  $\{f',g',f^4,g^4\}$ belongs to $(N\cup T)^*$, then~$u$ is derivable in~$G$.
In particular, by induction we can assume that any such~$u$ that we discussed for proving the inductive step satisfies $u'\in \{A,C\}^*\linebreak[3](N'\cup\{\emptyword\})(\{B,D\}\cup T)^*$. To avoid clumsy formulations, we will discuss the markers from  $\{f',g',f^4,g^4\}$ only in particular situations and argue why we maintain the property of being in a balanced situation.\LV{

}
Hence, consider a derivation $(w)_1\Ra^*(w')_1$, with $w\neq w'$ and $w,w'\in \{A,C\}^*\linebreak[3](N'\cup\{\emptyword\})(\{B,D\}\cup T)^*$; it  has to start like $(w)_1\Ra (v)_j$ in~$\Pi$. If some rule from $C1$ is applied to $w$, the rule will insert a rule marker into the string~$w$ and move to $C2$, or an~$A$ or~$B$ is deleted and the string moves to~$C4$. The introduction of rule markers in $C1$ will take care of the non-interference among the non-context-free and context-free rules. We will now discuss \LV{the possibilities to some }more details. 

Let us first assume (by induction) that the sentential form $w^1 = \a X\b$ for some $\a\in \{A,C\}^*$, $\b\in (\{B,D\}\cup T)^*$ and $X \in N'$ is derivable in~$G$ and the configuration $(w^1)_1$ is derivable in~$\Pi$. We \LV{will }prove (as induction step) that if $(w^1)_1\Ra_*(u)_1$, $w^1\neq u$, and $u\in \{A,C\}^*(N'\cup\{\lambda\})(\{B,D\}\cup T)^*$ is the first sentential form in $\{A,C\}^*(N'\cup\{\lambda\})(\{B,D\}\cup T)^*$ that appears in a $\Pi$-derivation\LV{, starting and ending in $C1$}, then $w^1\Ra u$\LV{ holds} in~$G$. \LV{The simulation of the $q$-rules is similar to that of the $p$-rules and hence, we discuss only the correctness of the $p$-rule simulation here.}

\smallskip\noindent
\underline{\it Caveat:} The reader might wonder why a context-free deletion of~$A$ or~$B$ as in rules $f1.2$ or $f1.4$ could work at all. But notice that in $C4$, which is the target component of these rules, the presence of $f$-style markers is checked in each of the rules. This prevents any successful derivation that interferes with, say, a $p$-rule derivation by deleting an $A$ or a $B$ in the first component, simply because there are no $p$-rule simulation rules in $C4$. We will hence tacitly assume that $f1.2$ etc. are not applied within $p$-simulations.

\smallskip\noindent
\underline{Applying $p1.1$} to $w^1 = \a X\b$ for some $\a\in \{A,C\}^*$, $\b\in (\{B,D\}\cup T)^*$ and as $X \in N'$ being the only nonterminal of $N'$ that is in $w^1$, the marker $p$ will be inserted to the left of $X$, yielding $(w^2)_2 =(\a pX\b)_2$. In $C2$, the only available and applicable rule among the rules meant to simulate context-free rules deletes the nonterminal~$X$ (which is not $S'$) of~$N'$. This results in $(w^3)_1= (\a p \b)_1$. Alternatively, rule $f2.2$ (or $g2.2$) would be applicable in $C2$, bringing the string back to $C1$. There, one could return to $C2$ by using $f1.1$, $g1.1$, $f1.7$ or $g1.7$. The last two possibilities (if applicable at all) would just undo the previous step and hence offer no progress in the derivation, or they would exchange an occurrence of $g^4$ with an occurrence of $f^4$ or vice versa. The first two mentioned rules would add a symbol from $\{f',g'\}$ to the string obtained from $w^3$ by previously adding a  symbol from $\{f^4,g^4\}$. Hence, if we think of the statement that our inductive argument should prove, we keep up a balanced situation as required.
Now, in $C1$, the rule $p1.1$ cannot be applied again, as no $X \in N'$ is present. So, the only applicable rule (within the rules in $C1$ that are meant to deal with simulating context-free rules) is $p1.2$ which results in $(w^3)_3=(\a pbY \b)_3$. In $C3$, 
the introduced marker $p$ is deleted and the resultant string $w^4 = \a bY \b$ is sent to $C1$, thus the intended and desired derivation is correctly  simulated. Notice that we could also apply $f3.2$ or $g3.2$ instead of $p3.1$, which would delete some $f'$ or $g'$ that  might have been introduced earlier. However, this would then create an imbalanced situation, with less occurrences of symbols from $\{f',g'\}$ than from $\{f^4,g^4\}$ within a string in $C1$. As we will see\LV{ when studying the derivation within the simulation of $f$-rules}, such an imbalance can never be resolved, so that such a string will not yield a terminal string. 
There is one possibility after applying $p1.1$ and $p2.1$ that still needs to be discussed. It might be possible to apply $f1.1$ on $w^3=\a p \b$. This is only possible if $\a=\a_1 A \a_2$, so that we arrive at $(u^4)_2=(\a_1 f'A \a_2p \b)_2$.
Now, $f2.2$ or $g2.2$ may be applicable, bringing us back to $C1$. However, this will maintain a balanced situation as claimed. 
Moreover, as the reader can check, this is indeed the only possible continuation, keeping in mind that $AB$ will not occur as a substring in the present string due to SGNF. So, in various ways, along with a simulation of a context-free rule, we may add symbols from $\{f',g',f^4,g^4\}$, but this always happens in a balanced way if it might be fruitful. \LV{The working of the simulation of $q$-rules is very similar to that of $p$-rules and it is left to the reader to verify this.}

\smallskip\noindent
\SV{The correctness of the simulation of the \underline{$h$-rule} is easily seen.}\LV{The transition phase from phase I to phase II is correctly simulated as follows, \underline{starting with $h1.1$}. Let $w'= \a S' \b$ be available in $C1$. As no nonterminal of $N'$ other than $S'$ is available in $w'$, no $p1.i$ or $q1.i$ rules ($i=1,2$) are applicable and the only applicable rule is $h1.1$ (cf. the \emph{Caveat} above) and applying it results to $(w'')_2 = (\a hh'S' \b)_2$. In $C2$,  $h2.1$ is the only applicable rule, so we get $(w''')_1 = (\a hh'\b)_1$.  In $C1$, with $w'''$, the only applicable rule is $h1.2$, resulting into $(w^{iv})_3= (\a h \b)_3$ and the remaining marker $h$ is deleted in $C3$ with the rule $h3.1$, resulting into $(w^v)=(\a \b)_1$, as intended.} Notice that there are again possibilities to introduce or delete symbols from $\{f',g',f^4,g^4\}$ similar as discussed above for simulating $p$-rules.

\smallskip\noindent
\underline{Applying $f1.1$} to $w^1 = \a X\b$ for some $\a\in \{A,C\}^*$, $\b\in (\{B,D\}\cup T)^*$ and  $X \in N'$, the marker $f'$ will be inserted to the left of some occurrence of~$A$, yielding $(w^2)_2 =(\a_1 f'A\a_2 X\b)_2$, with $\a=\a_1A\a_2$. As $w^1 = \a X\b$ indicates that we are simulating phase I of the work of the SGNF grammar~$G$, the substring $AB$ is absent, preventing us from applying $f2.1$. We might continue with $f2.2$ or $g2.2$, though, which introduces an occurrence of $f^4$ or $g^4$, respectively. This is one possibility how we can obtain `pairs' of occurrences of symbols from $\{f',g'\}$ and $\{f^4,g^4\}$, but we clearly maintain a balanced situation.  
Still, we might delete~$X$ with $p2.1$, say, but then we arrive at a typical imbalanced situation. 
We would have to apply $f1.1$ or $f1.2$ or $f1.4$ next, as no markers from the context-free rule simulations are present. In the first case, we see that we maintain an imbalanced situation of which we can never get rid, while in the second and third case, the derivation is blocked in $C4$ because of the lack of appropriate $f$-markers.

We could lead a similar discussion for \underline{applying $f1.7$} to a string that contains one occurrence of $N'$-symbols and at least one occurrence of $f^4$; in particular, the arguments concerning (im)balance remain the same, because for this condition, it does not matter whether we add $f'$\LV{ (by applying $f1.1$)} or\LV{ whether we} delete $f^4$\LV{ (by applying $f1.7$)}.

\smallskip
\noindent We shall discuss the cases for the \underline{$f$-rule simulation} next, including that the rules $f1.i$ were applied in a wrong manner. Recall the discussion of the \emph{Caveat} above which ruled out a premature application of $f1.2$ or of $f1.4$. We cannot start with any rules that require the presence of $f$-markers, apart from those stemming from a balanced situation that we will discuss below. \LV{

}We therefore discuss a derivation starting with $f1.1$ on $w^1$ as intended. Again by the absence of the marker $f''$, we have to apply $f2.1$ next, shifting the discussion of a balanced situation as created after applying $f2.2$ to what we say further down. The role of rule $f2.1$ is crucial insofar as it checks that the substring $AB$ is present in the current string. It is one of the important properties of SGNF that this substring can only occur once in a derived string, and this also means that we are in phase II of the SGNF derivation. 
By induction, we can assume this property also to hold for the string $w^1$ that is under discussion. In other words, we can assume that $w^1=\a AB\b$. Then, after applying $f1.1$ and $f2.1$, we are in the configuration $(w^2)_1=(\a'f'\a''AfB\b)_1$ with $\a=\a'\a''$, and $\a''$ being a string that is either empty or it starts with an~$A$. The intention would be to apply $f1.2$ on $w^2$. As the marker $f^2$ is absent, we have to apply $f4.1$ now. This is only possible if $\a''$ is empty.
Hence, $(w^2)_1=(\a f'AfB\b)_1$, and after applying $f1.2$ and $f4.1$, we necessarily arrive at the configuration $(w^3)_1=(\a f'f''f^2fB\b)_1$.
We now check the other possibilities in configuration $(w^2)_1$. Due to the absence of $f^3$, $f'''$ and $f^4$, only $f1.1$, $f1.3$, or $f1.4$ are applicable. If we apply $f1.1$, then  as there is no second substring $AB$ nor the $f$-marker $f^2$  present, we have to introduce $f^4$ next to obtain a configuration $(u)_1$ where $u$ contains two occurrences of~$f'$ and one occurrence of $f$ and one of~$f^4$.\LV{ Applying $f1.3$ or $f1.7$ on~$u$ clearly gives no progress.} 
So, on $(u)_1$, we might apply $f1.2$, followed by $f4.1$.
As argued above for the main line of derivation, this means that we arrive at a configuration $(v)_1$ with $v=\a_1f'\a_2f'f''f^2f\b_1Bf^4\b_2$, with $\a_1\a_2=\a$ and $\b_1B\b_2=\b$. In fact, we could continue now with the derivation, closely following the main line, the only difference being an additional $f'$ and $f^4$ being present in the string. This indicates that balanced situations are not necessarily a problem and also shows how they can arise.
On applying $f1.3$, we can delete $f$, but this takes us back to\LV{ the configuration} $(w^2)_2$\LV{ that we saw before}.
This analysis tells us that, in configuration  $(w^2)_1$, we have to apply $f1.2$.

Now, we study the configuration $(w^3)_1=(\a f'f''f^2fB\b)_1$. If we apply $f1.1$, we again have to introduce an occurrence of $f^4$ and can then follow the main line of derivation, which will finally lead to a balanced situation. If we delete $A$ or~$B$, the derivation is stuck in $C4$. Hence, the only applicable, promising rule is $f1.3$, deleting $f$ and moving to $C2$. After deleting~$f^2$, we get back to $C1$ with nearly the same (im)possibilities as just discussed, except that $f1.3$ is no longer available. Therefore, such a derivation cannot lead to a terminal string. Hence, in $C2$ the rule $f2.2$ must be applied, giving, with $\b'\b''=\b$ and $\b'$ being empty or ending with~$B$,  $(w^3)_1=(\a f'f''f^2fB\b'f^4\b'')_1$. Again, we might add a second~$f'$ with no fruitful continuation now, except for possibly creating other balanced situations finally. If we delete~$A$, then the derivation is stuck in $C4$. If we delete $f^4$ with $f1.7$, we have to delete~$f^2$ in $C2$ (unless we want to un-do $f1.7$ by applying $f2.2$)  and then, whatever we apply next ($f1.1$ or $f1.2$ or $f1.4$), the derivation is stuck, ignoring the possibility to add more and more occurrences of~$f'$ and~$f^4$. Therefore, on~$w^3$, we have to apply $f1.4$ which deletes one~$B$. In $C4$, we can only apply any rule if $\b'$ is  empty, so that $f4.2$ is applicable, resulting in $(w^4)_1=(\a f'f''f^2f'''f^3f^4\b)_1$.
The next six rule applications will delete all six $f$-markers; several ways to do this are possible.
What could go wrong?
Introducing a second $f'$-occurrence is again not interesting: it has to be matched by adding a $f^4$-occurrence; if we delete $f^2$ instead, then the addition of $f^4$ is indirect insofar, as $f1.7$ need not be executed to delete $f^2$.
After applying $f1.5$, $f3.1$, $f1.6$ and $f3.2$, we end up with a string from which one could delete any $A$- or $B$-occurrence and try to restart with $f4.2$. Yet, now the symbols~$f'''$ and~$f^3$ can only be deleted if one introduces two additional occurrences of $\{f',f''\}$, requiring a complete re-start of the simulation; it is not possible to get rid of~$f'''$ and~$f^3$ otherwise. Such a re-start would require the substring $AB$ which is currently not present. Hence, $(\a f^2f^4\b)_1$ can only be continued as intended, applying $f1.7$ and $f2.3$.

\smallskip
\noindent \underline{More on balanced situations.}
As we have been mentioning these over and over again in the previous arguments, let us briefly discuss possibilities when we do have some balanced occurrences from $\{f',g',f^4,g^4\}$ in configuration $(w^1)_1$, e.g., $w^1=\a AB \b$ where $\a\in\{A,C\}^*\{f'\}$  and $\b$ contains exactly one occurrence from $\{f^4,g^4\}$. Then, instead of applying $f1.1$, we could start the derivation with $f1.7$ or $g1.7$. Notice that this yields a configuration $(w^2)_2$ that could have also been obtained when starting from $(\a'AB\b')_1$ and applying $f1.1$, where $\a'$ is obtained from $\a$ by deleting $f'$ and $\b'$ is obtained from $\b$ by deleting the unique occurrence of either $f^4$ or $g^4$. This shows that balanced situations can lead to terminal strings finally, as they may converge again to the main line of derivation. Importantly, no\LV{ substantially new derivations can be obtained, so that in particular no} new terminal strings can be derived that are not following the possibilities given by~$G$.

This concludes the main arguments concerning the inductive step and hence the claim follows.
\end{proof}

\section{Summary and Open Problems}

In this paper, we focused on examining the computational power of graph-controlled ins-del systems with
a star as a control graph. We lowered the resource requirements to describe $\mathrm{RE}$, all recursively enumerable languages. 
We leave it open to explore the following possibilities. 
\begin{enumerate}

\item $\mathrm{GCID}_S(k;2,i',i'';1,j',j'') \stackrel{?}{=} \mathrm{RE}$ for $i'+i''\leq 1$ and $ j'+j'' \leq  1$ and some $k$ as small as possible, 

\item $\mathrm{GCID}_S(k';2,0,0;1,i',i'') \stackrel{?}{=} \mathrm{RE}$ for $i'+i''\leq 1$ and some $k'$ as small as possible. 

\end{enumerate}

Here we  only considered $\mathrm{GCID}$ systems where the underlying graph is star-controlled and does not contain loops. One may also consider a tree structure and / or the possibility to allow loops (i.e., rules have option `here' and the resultant string can stay back in the same component if such rules are applied), which may give additional power and connect closer to ins-del P systems and also to the results of~\cite{IvaVer2017}. 

\LV{If the underlying graph of a GCID system establishes a tree structure, then such a GCID system can be seen as a special form of a {\it P system}, namely,  an {\it ins-del P system}, where the components correspond to membranes, and the tree structure corresponds to the membrane structure. As P systems (a model for {\it membrane computing}) draw their origins from modeling computations of biological systems, considering insertions and deletions in this context is particularly meaningful. The start with symbol $S$ in $C1$ can be interpreted as having the axiom $S$ in the outermost membrane, while the other membranes do not contain any string initially.
For more details, see~\cite{Pau2002}. }   


\LV{In view of the connections with P systems, it would be also interesting to study Parikh images of (restricted) graph-controlled ins-del systems, as started out for matrix-controlled ins-del systems in \cite{FerKup2017}. This also relates to the macroset GCID systems considered in \cite{Fer2016}.}

{
\bibliographystyle{eptcs}

\bibliography{AFL-FKR}



\end{document}
\subsection*{More details of the proof of Theorem~\ref{RE4211100}}

The transition phase from phase I to phase II is correctly simulated as follows, \underline{starting with $h1.1$}. Let $w'= \a S' \b$ be available in $C1$. As no nonterminal of $N'$ other than $S'$ is available in $w'$, no $p1.i$ or $q1.i$ rules ($i=1,2$) are applicable and the only applicable rule is $h1.1$ (cf. the \emph{Caveat} above) and applying it results to $(w'')_2 = (\a hh'S' \b)_2$. In $C2$,  $h2.1$ is the only applicable rule, so we get $(w''')_1 = (\a hh'\b)_1$.  In $C1$, with $w'''$, the only applicable rule is $h1.2$, resulting into $(w^{iv})_3= (\a h \b)_3$ and the remaining marker $h$ is deleted in $C3$ with the rule $h3.1$, resulting into $(w^v)=(\a \b)_1$, as intended.

Let us now study the sequence $f1.1$, $f2.1$, $f1.3$, $f2.2$ of rule applications, possibly intercalated with excursions to $C4$ as we will describe.
Notice that the rules $f1.5$ and $f1.6$ delete symbols that have been previously introduced by $C4$-rules, therefore, intercalated visits of $C3$ are not that easy. However, in order to get rid of all $f$-markers at the end, we have to visit $C3$ twice. This also means that we have to apply both $f1.5$ and $f1.6$ finally, which in turn requires the insertion of $f^2$ and $f^3$, i.e., $C4$ was actually visited twice and both $f4.1$ and $f4.2$ was executed. With this reasoning, it is clear that the rules $f1.2$ and $f4.1$ have to be executed before diving a second time into $C2$, as there $f'$ would be deleted, which is required in rule $f4.1$. Hence, we have to consider the sequence $f1.1$, $f2.1$, $f1.2$, $f4.1$, $f1.3$, $f2.2$ of rule applications. Observe that prior to applying $f4.1$, by $f1.1$ we know that some $A$-occurrence is to the right of~$f'$, and some $A$-occurrence is to the left of~$f$ by $f2.1$. Therefore, $f4.1$ is only applicable if both $A$-occurrences coincide, they simply refer to the~$A$ in the center.
Also notice that the introduction of $f^2$ blocks a second application of $f4.1$ when visiting $C4$ again. 
A similar check concerning the $B$-occurrence(s) marked with $f$ to its left and with $f^4$ to its right is performed by $f4.2$. All this certifies that the sequence $f1.1$, $f2.1$, $f1.2$, $f4.1$, $f1.3$, $f2.2$, $f1.4$, $f4.2$ of rule applications is applicable only when starting with  $w^1=\a AB\b$ and then inserting $f'$ to the left of the $A$ in the center (and not anywhere within~$\a$), $f$ between $A$ and $B$ in the center and inserting $f^4$ to the right of the $B$ in the center (and not anywhere within~$\b$). After executing this sequence of rules, we necessarily arrive at the configuration $(\a f^2ff^3f^4\b)_1$.
Now, we can delete these four $f$-markers with the help of $C3$, and we have to do this, as otherwise we cannot find $AB$ or $CD$ in the center, which are necessary (as discussed) for any fruitful further derivations.
}

\end{document}